\newcommand{\wh}[1]{\widehat{#1}}
\newcommand{\ii}{\mathrm{i}}
\newcommand{\wt}{\widetilde}
\newtheorem{definition}{Definition}
\newtheorem{remark}{Remark}
\newtheorem{proposition}{Proposition}
\newtheorem{theorem}{Theorem}
\begin{document}

\title[Noncommutative Planck Constants]{Distortion of the Poisson Bracket by the Noncommutative Planck Constants}

\author[A. E. Ruuge, F. Van Oystaeyen]{Artur E. Ruuge \and Freddy Van Oystaeyen}

\address{
Department of Mathematics and Computer Science, 
University of Antwerp, 
Middelheim Campus Building G, 
Middelheimlaan 1, B-2020, 
Antwerp, Belgium}

\email{artur.ruuge@ua.ac.be; fred.vanoystaeyen@ua.ac.be}


\maketitle

\begin{abstract} 
In this paper 
we introduce a kind of ``noncommutative neighbourhood'' 
of a semiclassical parameter corresponding to the Planck constant. 
This construction is defined as a certain filtered and graded algebra 
with an infinite number of generators indexed by planar binary leaf-labelled trees. 
The associated graded algebra (the classical shadow) is interpreted as 
a ``distortion'' of the algebra of classical observables of a physical system. 
It is proven that there exists a $q$-analogue of the Weyl quantization, 
where $q$ is a matrix of formal variables, which induces a nontrivial 
noncommutative analogue of a Poisson bracket on the classical shadow. 
\end{abstract}

\section{Introduction}

In the present paper we describe a mathematical construction 
which can be perceived as a kind of \emph{noncommutative neighbourhood} 
of the parameter $\hbar \to 0$ of the semiclassical approximation of quantum theory 
(the Planck ``constant''). 
This construction can be of interest in noncommutative algebraic geometry, as well as in mathematical 
physics, and, informally speaking, it is linked to an idea of a \emph{``quantization on a noncommutative space''}. 

In quantum mechanics, if we speak about quantization, then this normally implies that we have a 
linear map $Q: \mathcal{A} \to \mathcal{B}$ between two algebras defined 
over a ring of formal power series $\mathbb{C} [[\hbar]]$, where $\mathcal{A}$ is commutative 
(the classical observables), and $\mathcal{B}$ is noncommutative (the quantum observables). 
Since the map $Q$ does not need to be an algebra homomorphism, the multiplication on $\mathcal{B}$ 
induces a noncommutative associative product $*_{\hbar}$ on $\mathcal{A}$, 
\begin{equation*} 
f *_{\hbar} g = f g + \hbar B_1 (f, g) + \hbar^2 B_2 (f, g) + \dots, 
\end{equation*}
for $f, g \in \mathcal{A}$, where $B_1, B_2, \dots$ are bilinear maps. 
The first map $B_1$ gives rise to a Poisson bracket on $\mathcal{A}$, 
\begin{equation*} 
\lbrace f, g \rbrace := B_1 (f, g) - B_1 (g, f), 
\end{equation*}
and, in fact, ``quantization'' is always a quantization in the direction of a given 
Poisson structure $\lbrace -, - \rbrace_{\mathrm{cl}}$ on $\mathcal{A}$ (i.e. one is asked to construct a map $Q$ 
such that $\lbrace -, - \rbrace = \lbrace -, - \rbrace_{\mathrm{cl}}$).  

It is natural to consider a more general case where both algebras $\mathcal{A}$ and $\mathcal{B}$ can be noncommutative. 
Then, in a certain sense, a quantization $Q: \mathcal{A} \to \mathcal{B}$ consists in making things ``more noncommutative''. 
Naively, this might look as a technical generalization of what already exists, but 
in reality one runs into a serious conceptual problem here. 
\emph{How do we define a noncommutative analogue of the Poisson bracket?} 
Given an arbitrary noncommutative algebra $\mathcal{A}$, 
if we impose the antisymmetry, the Leibniz rule, and the Jacobi identity (i.e. the usual axioms 
of the Poisson bracket), then it can easily turn out that there are no ``interesting'' Poisson structures on it 
\cite{Kubo,Xu,Yao_Ye_Zhang}, 
and essentially the only Poisson structure is given by the commutator $[f, g] := f g - g f$, $f, g \in \mathcal{A}$.  
Therefore, there exist different approaches to define a noncommutative generalization of the Poisson bracket. 

One of the possibilities is to modify the axioms by introducing \emph{twists} (twisted antisymmetry, 
twisted Leibniz rule, twisted Jacobi identity), what is a common practice, for instance, in the theory of 
coloured Lie algebras and their representations 
\cite{Hartwig_Larsson_Silvestrov,Richard_Silvestrov}. 
Another possibility is to perceive the problem in terms 
of homotopy theory and to generalize the Jacobi identity ``up to homotopy'' 
\cite{Alekseev_KosmannSwarzbach,Alekseev_KosmannSwarzbach_Meinrenken,Farkas,Fialowski_Penkava,Penkava}.  
The Poisson bracket then gets replaced with an infinite collection of operations 
\begin{equation*} 
\lbrace -, -, \dots, - \rbrace_{n} : \mathcal{A}^{\otimes n} \to \mathcal{A}, 
\end{equation*}
of different arity $n = 1, 2, \dots$, satisfying the $L_{\infty}$-algebra kind of axioms. 
The third possibility, which has recently received some additional attention in the literature
\cite{VandenBergh,Bocklandt_LeBruyn,CrawleyBoevey_Etingof_Ginzburg,Ginzburg,Pichereau_VandeWeyer}, 
is to consider the ``double Poisson'' bracket, 
\begin{equation*} 
\langle -, - \rangle : \mathcal{A}^{\otimes 2} \to \mathcal{A}^{\otimes 2}, 
\end{equation*}
i.e. a bracket with the values in $\mathcal{A} \otimes \mathcal{A}$, rather than in $\mathcal{A}$. 
The latter seems to be a reasonable approach for the algebras which are ``far from commutative''. 
For example, 
there exists a canonical double Poisson structure on 
the free noncommutative algebra $\mathbb{C} \langle \xi_1, \xi_2, \dots, \xi_n \rangle$ on 
a finite number of symbols $\xi_1, \xi_2, \dots, \xi_n$. 
To keep the story short, the idea of a noncommutative Poisson bracket leaves some space for creativity. 

The main construction of the present paper can be perceived as follows. 
If we look at an abstract $d$-dimensional quantum mechanical system with coordinates 
$\wh{x}_1, \wh{x}_2, \dots, \wh{x}_d$ and momenta $\wh{p}_1, \wh{p}_2, \dots, \wh{p}_d$, 
then we have the canonical commutation relations 
\begin{equation*} 
[\wh{p}_i, \wh{x}_{j}] = - \ii \hbar, \quad 
[\wh{p}_i, \wh{p}_j] = 0 = [\wh{x}_i, \wh{x}_j], 
\end{equation*}
for $i, j \in [d] := \lbrace 1, 2, \dots, d \rbrace$, where $\hbar$ is central. 
\emph{Why do we put the commutator $[\wh{p}_i, \wh{x}_j]$ into the centre?} 
If one considers the bracketed expressions over the generators $\wh{x}_i$, $\wh{p}_j$, $i, j \in [d]$, 
which have a length of at least three, then this yields just $0, 0, \dots$. 
Essentially, we suggest to go to another extreme: let us \emph{declare every next commutator a new variable}. 
In other words, let us replace $\hbar$ with an infinite collection of symbols 
$\lbrace \wh{\hbar}_{\Gamma} \rbrace_{\Gamma}$, where $\Gamma$ varies over all finite planar binary trees 
(i.e. all possible bracketings) equipped with a leaf labelling from the set $[2 d]$, 
\begin{equation*} 
\hbar \, \longrightarrow \, 
\wh{\hbar}_i, \quad 
\wh{\hbar}_{[i, j]}, \quad 
\wh{\hbar}_{[[i, j], k]}, \quad
\wh{\hbar}_{[[[i, j], k], l]}, \quad
\wh{\hbar}_{[[i, j], [k, l]]}, \quad 
\dots, 
\end{equation*}
where $i, j, k, l, \dots$ vary over $[2 d] := \lbrace 1, 2, \dots, 2 d \rbrace$. 
The commutation relations are as follows: 
\begin{equation*} 
[\wh{\hbar}_i, \wh{\hbar}_j] = \wh{\hbar}_{[i, j]}, \quad 
[\wh{\hbar}_{[i, j]}, \wh{\hbar}_{k}] = \wh{\hbar}_{[[i, j], k]}, \quad 
[\wh{\hbar}_{[i, j]}, \wh{\hbar}_{[k, l]}] = \wh{\hbar}_{[[i, j], [k, l]]}, \quad 
\dots, 
\end{equation*} 
for $i, j, k, l, \dots \in [2 d]$. 
This yields an infinite dimensional noncommutative algebra, which we denote 
$\mathcal{E}_{2 d} (\hbar)$ and perceive as a ``noncommutative neighbourhood'' of $\hbar$, 
having in mind an analogy with the terminology of 
\cite{Kapranov} 
for the noncommutative manifolds. 
One can now identify the generators corresponding to the coordinates and momenta with the 
\emph{noncommutative Planck constants} $\wh{\hbar}_{i}$, $i \in [2 d]$, corresponding to the trees with only one leaf: 
\begin{equation*} 
\begin{aligned}
\wh{p}_1 &= \wh{\hbar}_1,& 
\wh{p}_2 &= \wh{\hbar}_2,& 
\dots \quad 
\wh{p}_{d} &= \wh{\hbar}_{d}, \\
\wh{x}_1 &= \wh{\hbar}_{d + 1},&   
\wh{x}_2 &= \wh{\hbar}_{d + 2},& 
\dots \quad 
\wh{x}_{d} &= \wh{\hbar}_{2 d}. 
\end{aligned}
\end{equation*}

In the present paper, we construct a natural $q$-deformation $\mathcal{E}_{2 d}^{q} (\hbar)$ of the 
algebra $\mathcal{E}_{2 d} (\hbar)$ corresponding to a $2 d \times 2 d$ matrix of formal variables, $q = \| q_{i, j} \|$, 
$q_{i, i} = 1$, $q_{i, j} = q_{j, i}^{- 1}$, where $i, j \in [2 d]$. 
In particular, every commutator $[\wh{\hbar}_i, \wh{\hbar}_j]$, $i, j \in [2 d]$, 
in the defining relations of $\mathcal{E}_{2 d} (\hbar)$ 
gets replaced with a $q$-commutator $[\wh{\hbar}_i, \wh{\hbar}_j]_{q} := \wh{\hbar}_i \wh{\hbar}_j - q_{j, i} \wh{\hbar}_j \wh{\hbar}_i$. 
After that we consider a series of truncations 
$(\mathcal{E}_{2 d}^{q} (\hbar))_{\leqslant N}$, $N = 1, 2, 3, \dots$, 
obtained by factoring out of the ideals in  
$\mathcal{E}_{2 d}^{q} (\hbar)$, 
generated by the symbols $\wh{\hbar}_{\Gamma}$ with the number of leaves 
$|\Gamma| > N$. 
Note, that the commutation relations of $(\mathcal{E}_{2 d}^{\mathbf{1}})_{\leqslant 2}$, where 
$\mathbf{1}$ is a $2 d \times 2 d$ matrix with all entries equal to one, appears in 
\cite{Castro}, but in a different context. 
We prove, that for every $N = 1, 2, 3, \dots$, there exists an analogue 
of the Weyl quantization map (the $q$-Weyl quantization), which we define as a linear map 
\begin{equation*} 
W_{N}^{q} : \mathrm{gr} \big( (\mathcal{E}_{2 d}^{q} (\hbar))_{\leqslant N} \big) \to 
(\mathcal{E}_{2 d}^{q} (\hbar))_{\leqslant N}, 
\end{equation*}
where $\mathrm{gr} ( - )$ denotes taking the associated graded with respect to a filtration 
induced by the degrees of the generators 
\begin{equation*} 
\mathrm{deg} (\wh{\hbar}_{\Gamma}) := |\Gamma| - 1. 
\end{equation*}
It turns out, that this map $W_{N}^{q}$ has an nontrivial property 
in comparison to the well known Weyl quantization map in quantum mechanics. For an abstract physical system 
with two degrees of freedom ($d = 2$), the latter is a linear map 
\begin{equation*} 
Q_{\mathrm{Weyl}}: \mathbb{C} [p, x] [\hbar] \to 
\mathbb{C} \langle \wh{p}, \wh{x} \rangle [\hbar]/ (\wh{p} \wh{x} - \wh{x} \wh{p} = - \ii \hbar), 
\end{equation*} 
where one puts an $\hbar$-adic filtration on the domain of $Q_{\mathrm{Weyl}}$. 
It is linear not just over $\mathbb{C}$, but over the ring $\mathbb{C}[\hbar]$, 
so, in particular, we have: $Q_{\mathrm{Weyl}} (\hbar f) = \hbar Q_{\mathrm{Weyl}} (f)$, for any monomial $f$ in $x$ and $p$.  
On the other hand, 
if we denote $\hbar_{\Gamma}$ the canonical image of $\wh{\hbar}_{\Gamma}$ in the associated graded, and 
take, for example, a generic monomial $u$ in the variables $\hbar_k$, $k \in [2 d]$, 
and a symbol $\hbar_{[i, j]}$, where $i, j \in [2 d]$, $i \not = j$, 
then we obtain: 
$W_{N}^{q} (\hbar_{[i, j]} u) \not = \wh{\hbar}_{[i, j]} W_{N}^{q} (u)$, if $N > 2$. 
In this sense, the generators $\wh{\hbar}_{\Gamma}$ corresponding to the trees $\Gamma$ with the number of leaves $|\Gamma| \geqslant 2$ 
are just as good on the role of ``observables'' as the generators $\wh{\hbar}_{i}$, $i \in [2 d]$. 
We suggest to refer to this fact as to \emph{distortion of quantization} by the noncommutative Planck constants. 
If one considers a representation theory of $(\mathcal{E}_{2 d}^{q} (\hbar))_{\leqslant N}$, say, for $N = 3$, 
then one can define an ``observable'' which \emph{does not have an analogue} in quantum theory. 
Does this only distort the conventional description of our physical reality, or does it make some physical sense? 

For the Weyl quantization map $Q_{\mathrm{Weyl}}$, one can extract the Poisson bracket on 
$\mathbb{C} [x, p] [\hbar]$ as a first semiclassical correction to the multiplication of classical observables: 
\begin{equation*} 
Q_{\mathrm{Weyl}} \Big( f g - \frac{\ii \hbar}{2} \lbrace f, g \rbrace + O (\hbar^{2 + |f| + |g|}) \Big) = 
Q_{\mathrm{Weyl}} (f) Q_{\mathrm{Weyl}} (g), 
\end{equation*}
where $O (\hbar^n)$ stands for an element of degree $n$ in the $\hbar$-adic filtration on 
$\mathbb{C} [p, x] [\hbar]$, $\mathrm{deg} (\hbar) = 1$, 
$\mathrm{deg} (p) = 0$, $\mathrm{deg} (x) = 0$, 
and $|f|$ and $|g|$ denote the degrees of homogeneous elements 
$f, g \in \mathbb{C} [p, x] [\hbar]$. 
This bracket $\lbrace -, - \rbrace$ is bilinear not just over $\mathbb{C}$, but over $\mathbb{C} [\hbar]$. 
In the present paper we perform a similar extraction of a bracket $\langle -, - \rangle_{N}^{q}$
from the $q$-Weyl quantization map $W_{N}^{q}$, 
for $N = 1, 2, 3, \dots$, 
\begin{equation*} 
W_{N}^{q} \big( u v + \langle u, v \rangle_{N}^{q} + O (2 + |u| + |v|) \big) = W_{N}^{q} (u) W_{N}^{q} (v), 
\end{equation*}
where $u, v \in \mathrm{gr} \big( (\mathcal{E}_{2 d}^{q} (\hbar))_{\leqslant N} \big)$ 
are homogeneous elements of degrees $|u|$ and $|v|$, respectively, and 
$O (n)$, $n = 0, 1, 2, \dots$, stands for an element of degree at least $n$ in the filtration induced by the grading. 
Note, that this corresponds to considering ``semiclassics'' with $\hbar \to 0$ and $q$ fixed, 
which is different from \cite{Beggs_Majid,Chaichian_Demichev_Kulish}. 
The distortion of quantization has its counterpart on the properties of the corresponding 
bracket $\langle -, - \rangle_{N}^{q}$: it is linear only over $\mathbb{C}$. 
We call this fact a \emph{distortion of the Poisson bracket} by the noncommutative Planck constants 
and describe its properties as of a natural candidate for a noncommutative Poisson bracket 
(the ``direction'' of quantization on a noncommutative geometry).

\section{Planar binary trees} 

It is convenient to realize the collection of all planar binary trees as follows. 
Take your favourite singleton $\lbrace * \rbrace$ (i.e. $*$ is just a symbol). 
Set $\mathcal{Y}_{0} := \emptyset$ and 
$\mathcal{Y}_{1} := \lbrace * \rbrace$. 
For every $n = 2, 3, 4, \dots$, define $\mathcal{Y}_{n}$ recursively as a set of all pairs 
$T = (u, v)$, where $u \in \mathcal{Y}_{p}$, $v \in \mathcal{Y}_{q}$, $p, q \geqslant 1$, $p + q = n$. 
The tree $L (T) := u$ is termed the \emph{left branch} of $T$, the tree 
$R (T) := v$ is termed the \emph{right branch} of $T$, 
and $|T| := n$ is termed the \emph{number of leaves} in $T$. 
Denote $\mathcal{Y} := \bigsqcup_{n = 0}^{\infty} \mathcal{Y}_{n}$.

The planar binary trees $\mathcal{Y}$ naturally encode all possible bracketings over a set of symbols. 
Suppose we have an alphabet $\Omega$. 
Consider a word of length three: $w = x_1 x_2 x_3$, where $x_1, x_2, x_3 \in \Omega$. 
The set $\mathcal{Y}_{3}$ contains two elements, $((*, *), *)$ and $(*, (*, *))$. 
The first element corresponds to $((x_1 x_2) x_3)$, and the second element corresponds to $(x_1 (x_2 x_3))$. 
The set $\mathcal{Y}_4$ contains already five elements, and if we take, for example, $(*, ((*, *), *)) \in \mathcal{Y}_{4}$ 
and a word $w' = x_1 x_2 x_3 x_4$ over $\Omega$, then the corresponding bracketed expression is 
$(x_1 ((x_2 x_3) x_4))$, etc. 
The number of elements $\# \mathcal{Y}_{n} = C_{n - 1}$, for $n = 1, 2, 3, \dots$, 
is given by the \emph{Catalan numbers}, 
\begin{equation*} 
C_{m} := \frac{(2 m)!}{(m + 1)! m!}, 
\end{equation*}   
where $m = 0, 1, 2, \dots$. 

Fix an alphabet $\Omega$. A leaf-labelled planar binary tree can be perceived as a pair 
$\Gamma = (T, w)$, where $T \in \mathcal{Y}$, and $w$ is a word of length $|T|$ over $\Omega$. 
For example, the set $\mathcal{Y}_2$ contains only one element $(*, *)$, and if we take $x_1, x_2 \in \Omega$,  
then $\Gamma' = ((*, *), x_1 x_2)$ is a leaf-labelled tree, 
such that the leaf on the left branch is labelled with $x_1$, and 
the leaf on the right branch is labelled with $x_2$. 
Denote $\mathcal{Y} (\Omega)$ the set of all leaf labelled planar binary trees over the alphabet $\Omega$. 
The \emph{left branch} of a leaf-labelled tree 
$\Gamma = (T, w) \in \mathcal{Y} (\Omega)$, $w = x_1 x_2 \dots x_n$, $x_i \in \Omega$, $i \in [n]$, 
is defined as $L (\Gamma) := (L (T), w')$, where $w' = w_1 w_2 \dots w_m$, $m = |L (T)|$, and 
the \emph{right branch} is defined as $R (\Gamma) := (R (T), w'')$, where $w'' = w_{m + 1} w_{m + 2} \dots w_n$.

The set $\mathcal{Y}$ of all planar binary trees can be totally ordered as follows. 
Let $T, T' \in \mathcal{Y}$, $T \not = T'$. 
If $|T| < |T'|$, then set $T \prec T'$, and 
if $|T| > |T'|$, then set $T \succ T'$. 
In case $|T| = |T'|$, 
both trees must have at least two leaves, since $\# \mathcal{Y}_1 = 1$, while $T \not = T'$.  
Therefore, one can 
look at the left and right branches: 
$T = (L (T), R (T))$ and $T' = (L (T'), R (T'))$.  
Set $T \prec T'$, if $L (T) \prec L (T')$, and 
$T \succ T'$, if $L (T) \succ L (T')$. 
If $L (T) = L (T')$, then compare the right branches and set 
$T \prec T'$ if and only if $R (T) \prec R (T')$. 
Since the number of leaves in the branches is always strictly smaller than the 
number of leaves in the whole tree, this defines recursively a total order $\prec$ on $\mathcal{Y}$.  

Assume that there is a total order $\prec$ on the alphabet $\Omega$ 
(here we overload the notation $\prec$ and denote the total orders on different sets with the same 
symbol, assuming the apparent ambiguity is always resolved by the context). 
It induces a total order on the words: 
$x_1 x_2 \dots x_m \prec y_1 y_2 \dots y_n$ if and only if 
$(x_1 \prec y_1) \vee (x_1 = y_1 \,\&\, x_2 \dots x_m \prec y_2 \dots y_n)$, 
where $x_i, y_j \in \Omega$, $i \in [m]$, $j \in [n]$.   
Therefore the leaf-labelled trees $\mathcal{Y} (\Omega)$ can be totally ordered as well: 
\begin{equation*} 
(T, w) \prec (T', w') \quad :\Leftrightarrow \quad 
(T \prec T') \vee (T = T' \,\&\, w \prec w'), 
\end{equation*}
where $(T, w), (T', w') \in \mathcal{Y} (\Omega)$. 
If the total number of letters is finite, i.e. $\# \Omega < \infty$, then 
there is a unique bijection $\mathcal{Y} (\Omega) \simeq \mathbb{Z}_{> 0}$, 
which respects the order,
\begin{equation*} 
\nu : \mathcal{Y} (\Omega) \overset{\sim}{\to} \mathbb{Z}_{> 0}, \quad 
\Gamma \prec \Gamma' \,\Leftrightarrow\, \nu (\Gamma) < \nu (\Gamma'), 
\end{equation*}
where $\Gamma, \Gamma' \in \mathcal{Y} (\Omega)$. 
We will use this map later to define the standard ordering for the monomials in 
the noncommutative Planck constants $\wh{\hbar}_{\Gamma}$, $\Gamma \in \mathcal{Y} ([2 d])$. 
The total order on $[2 d]$ is assumed to be $<$.  

Note, that there is a naturally defined operation on $\mathcal{Y}$ corresponding to the \emph{concatenation} of trees, 
$T \vee T' := T''$, where $T'' = (T, T') \in \mathcal{Y}$. 
Since $|T \vee T'| = |T| + |T'|$, one has also a concatenation on the leaf-labelled trees: 
\begin{equation*} 
(T, w) \vee (T', w') := (T \vee T', w w'), 
\end{equation*} 
for every $(T, w), (T', w') \in \mathcal{Y} (\Omega)$. 
The latter generalizes naturally to a \emph{composition} of trees as follows. 
If one takes a collection of leaf-labelled trees $\Gamma_i = (T_i, w_i) \in \mathcal{Y} (\Omega)$, 
$i \in [n]$, and a leaf-labelled tree $\Gamma = (T, \bar \sigma) \in \mathcal{Y} ([n])$, 
where $\bar \sigma := (\sigma (1), \sigma (2), \dots, \sigma (n))$ is a word over $[n]$ 
corresponding to a permutation $\sigma \in S_n$, then deleting, for every $i \in [n]$, 
the $i$-th leave in $\Gamma$ and inserting in its place   
the tree $\Gamma_i$, one obtains a tree $T'$ with a labelling 
$w' = w_1 w_2 \dots w_n$. Denote the result $\Gamma' \equiv (T', w')$ as $T_{\sigma} (\Gamma_1, \Gamma_2, \dots, \Gamma_n)$. 
If $\sigma = \mathit{id}$, then write just $\Gamma' = T (\Gamma_1, \Gamma_2, \dots, \Gamma_n)$. 
In particular, $\Gamma_1 \vee \Gamma_2 = (*, *) (\Gamma_1, \Gamma_2)$.

\section{The bracketing algebra}

Let $q = \| q_{i, j} \|$ a $2 d \times 2 d$ matrix of formal variables $q_{i, j}$
(where $d$ is a fixed positive integer) satisfying $q_{i, i} = 1$ and $q_{i, j} = q_{j, i}^{-1}$, $i, j \in [2 d]$. 
Extend this notation as follows: 
\begin{equation*} 
q_{\Gamma, \Gamma'} := \prod_{\alpha = 1}^{n} \prod_{\beta = 1}^{m} q_{i_{\alpha}, j_{\beta}},  
\end{equation*}
for every $\Gamma, \Gamma' \in \mathcal{Y} ([2 d])$, 
where $(i_1, i_2, \dots, i_n)$ is the leaf-labelling of $\Gamma$, $n = |\Gamma|$ 
(we write the words over $[2 d]$ as sequences separated by commas), and 
$(j_1, j_2, \dots, j_m)$ is the leaf-labelling of $\Gamma'$, $m = |\Gamma'|$. 
Observe, that 
\begin{equation*} 
q_{\Gamma, \Gamma} = 1, \quad 
q_{\Gamma, \Gamma'} = q_{\Gamma', \Gamma}^{-1}, 
\end{equation*}
for $\Gamma, \Gamma' \in \mathcal{Y} ([2 d])$.

Denote $K_{2 d}^{(q)}$ the ring of polynomials with complex coefficients in the variables  
$\lbrace q_{i, j} \rbrace_{i < j}$ and $\lbrace q_{i, j}^{- 1} \rbrace_{i < j}$, $i, j \in [2 d]$. 
Consider an algebra $\mathcal{E}_{2 d}^{q} (\hbar)$ over $K_{2 d}^{(q)}$
generated by an infinite collection of symbols 
$\lbrace \wh{\hbar}_{\Gamma} \rbrace_{\Gamma \in \mathcal{Y} ([2 d])}$ satisfying the relations 
\begin{equation*} 
[\wh{\hbar}_{\Gamma}, \wh{\hbar}_{\Gamma'}]_{q} := 
\wh{\hbar}_{\Gamma} \wh{\hbar}_{\Gamma'} - q_{\Gamma', \Gamma} 
\wh{\hbar}_{\Gamma'} \wh{\hbar}_{\Gamma} = \wh{\hbar}_{\Gamma \vee \Gamma'}, 
\end{equation*}
where $\Gamma, \Gamma' \in \mathcal{Y} ([2 d])$. 
Observe, that the notation implies 
\begin{equation} 
\label{eq:hbar_convention}
\wh{\hbar}_{\Gamma \vee \Gamma} = 0, \quad 
\wh{\hbar}_{\Gamma \vee \Gamma'} = -q_{\Gamma', \Gamma} \wh{\hbar}_{\Gamma' \vee \Gamma}, 
\end{equation}
for any $\Gamma, \Gamma' \in \mathcal{Y} ([2 d])$. 
Therefore, one may select a collection of independent generators as 
$\lbrace \wh{\hbar}_{\Gamma} \rbrace_{\Gamma \in \mathcal{Y}^{+} ([2 d])}$, where 
the set $\mathcal{Y}^{+} ([2 d])$ is recursively defined as follows: 
\begin{equation*} 
\begin{aligned}
\Gamma \in \mathcal{Y}^{+} ([2 d]), \quad &\text{if $|\Gamma| = 1$}, \\
\Gamma \in \mathcal{Y}^{+} ([2 d]), \quad &\text{if 
$L (\Gamma), R (\Gamma) \in \mathcal{Y}^{+} ([2 d]) \,\&\, L (\Gamma) \prec R (\Gamma)$}. 
\end{aligned}
\end{equation*}

\begin{definition}
The algebra 
$\mathcal{E}_{2 d}^{q} (\hbar) := K_{2 d}^{(q)} \langle \lbrace \xi_{\Gamma} \rbrace_{\Gamma \in \mathcal{Y}^{+} ([2 d])} 
\rangle/ \mathcal{I}_{2 d}^{(q)}$, where $\mathcal{I}_{2 d}^{(q)}$ is the ideal generated by the relations 
$[\xi_{\Gamma}, \xi_{\Gamma'}]_{q} = \xi_{\Gamma \vee \Gamma'}$, where 
$\Gamma$ and $\Gamma'$ vary over $\mathcal{Y}^{+} ([2 d])$, and $\Gamma \prec \Gamma'$, is termed 
\emph{the bracketing algebra}. 
The canonical images $\wh{\hbar}_{\Gamma} \in \mathcal{E}_{2 d}^{q} ([2 d])$ 
of the symbols $\xi_{\Gamma}$, $\Gamma \in \mathcal{Y}^{+} ([2 d])$, 
are termed \emph{the noncommutative Planck constants}. 
\end{definition}

\begin{remark}
Looking for a natural name for the algebra $\mathcal{E}_{2 d}^{q} ([2 d])$, 
we came up with the fact that \emph{bracketing} is also an important concept 
in the phenomenological philosophy of Edmund Husserl, which means something like 
``suspending of judgement'' that precedes a phenomenological analysis. 
It is also termed \emph{epoch\'e ($\varepsilon \pi \omega \chi \eta$)}, so perhaps 
another name for $\mathcal{E}_{2 d}^{q} ([2 d])$ could be the ``epoch\'e algebra''. 
\end{remark}

It is natural to keep the notational convention \eqref{eq:hbar_convention} for the noncommutative Planck constants 
$\wh{\hbar}_{\Gamma}$, $\Gamma \in \mathcal{Y}^{+} ([2 d])$, extending the notation $\wh{\hbar}_{\Gamma}$ 
to all leaf-labelled trees $\Gamma \in \mathcal{Y} ([2 d])$. 
The underlying vector space of the bracketing algebra is naturally graded as 
\begin{equation*} 
\mathrm{deg} \big( \wh{\hbar}_{\Gamma} \big) := |\Gamma| - 1, 
\end{equation*}  
where $|\Gamma|$ is the number of leaves in $\Gamma \in \mathcal{Y}^{+} ([2 d])$. 
Note, that $|\Gamma| - 1$ coincides with the number of internal vertices in $\Gamma$. 
The induced decreasing filtration respects the multiplication on $\mathcal{E}_{2 d}^{q} (\hbar)$, 
so one obtains an algebra filtration $\mathcal{F}^{\bullet} \mathcal{E}_{2 d}^{q} (\hbar)$, 
\begin{equation} 
\label{eq:epoche_filtration}
\mathcal{F}^{n} \mathcal{E}_{2 d}^{q} (\hbar) := \langle \lbrace \wh{\hbar}_{\Gamma} \,|\, 
\Gamma \in \mathcal{Y}^{+} ([2 d]), \mathrm{deg} (\wh{\hbar}_{\Gamma}) \geqslant n
\rbrace\rangle, 
\end{equation}
where $n = 0, 1, 2, \dots$. 

\begin{definition} 
The associated graded algebra $\mathcal{A}_{2 d}^{q} (\hbar) := \mathrm{gr}_{\mathcal{F}} \big( 
\mathcal{E}_{2 d}^{q} (\hbar) \big)$ with respect to filtration \eqref{eq:epoche_filtration} is termed 
\emph{the classical shadow} of the bracketing algebra. 
\end{definition}

We denote the canonical images of $\wh{\hbar}_{\Gamma}$ in the classical shadow as 
$\hbar_{\Gamma} \in \mathcal{A}_{2 d}^{q} (\hbar)$, $\Gamma \in \mathcal{Y} ([2 d])$ (i.e. without the hats). 
Observe, that $\mathcal{A}_{2 d}^{q} (\hbar)$ is nothing else but an infinite dimensional  $q$-affine space, 
\begin{equation*} 
\hbar_{\Gamma} \hbar_{\Gamma'} = q_{\Gamma', \Gamma} \hbar_{\Gamma'} \hbar_{\Gamma}, 
\end{equation*}
for $\Gamma, \Gamma' \in \mathcal{Y} ([2 d])$. 
A ``quantization'' should be perceived as a linear 
over $K_{2 d}^{(q)}$ map $Q : \mathcal{A}_{2 d}^{q} (\hbar) \to \mathcal{E}_{2 d}^{q} (\hbar)$. 
One obtains an example of such a map as follows. 
Denote 
$\mathbf{B} (\mathcal{A}_{2 d}^{q} (\hbar))$
the monomial basis in the classical shadow formed by the monomials of the shape 
$\hbar_{\Gamma_1} \hbar_{\Gamma_2} \dots \hbar_{\Gamma_n}$, where 
$n = \in \mathbb{Z}_{> 0}$, 
$\Gamma_1, \Gamma_2, \dots, \Gamma_{n} \in \mathcal{Y}([2 d])$, 
such that $\Gamma_1 \succcurlyeq \Gamma_2 \succcurlyeq \dots \succcurlyeq \Gamma_n$, 
the symbol $\succcurlyeq$ means ``$\succ$ or $=$''. 
For an element $w = \hbar_{\Gamma_1} \hbar_{\Gamma_2} \dots \hbar_{\Gamma_n} 
\in \mathbf{B} (\mathcal{A}_{2 d}^{q} (\hbar))$, set 
\begin{equation} 
\label{eq:normal_quantization}
Q : w \mapsto 
\wh{\hbar}_{\Gamma_1} \wh{\hbar}_{\Gamma_2} \dots \wh{\hbar}_{\Gamma_n}. 
\end{equation}
This extends by $K_{2 d}^{(q)}$-linearity to the whole $\mathcal{A}_{2 d}^{q} (\hbar)$, 
$Q: \mathcal{A}_{2 d}^{q} (\hbar) \to \mathcal{E}_{2 d}^{q} (\hbar)$. 
For a pair of basis elements $w, w' \in \mathbf{B} (\mathcal{A}_{2 d}^{q} (\hbar))$, 
define $q (w, w')$ by $w' w = q (w, w') w w'$. 
The commutation relations imply that 
$\Delta (w, w') := Q (w') Q (w) - q (w, w') 
Q (w) Q (w') \in \mathcal{F}^{|w| + |w'| + 1} \mathcal{E}_{2 d}^{q} (\hbar)$, where 
$|w|$ and $|w'|$ are the degrees of $w$ and $w'$, respectively. 
Furthermore, there exists a unique element $\langle w, w' \rangle \in \mathcal{A}_{2 d}^{q} (\hbar)$ of degree 
$|w| + |w'| + 1$, such that 
\begin{equation} 
\label{eq:normal_bracket}
\Delta (w, w') = 
Q (\langle w, w' \rangle) + Z, 
\end{equation}
where $Z \in \mathcal{F}^{|w| + |w'| + 2} \mathcal{E}_{2 d}^{q} (\hbar)$. 
Extending the notation $\langle -, - \rangle$ by $K_{2 d}^{(q)}$-bilinearity, one obtains a 
bracket $\langle -, - \rangle: \mathcal{A}_{2 d}^{q} (\hbar) \otimes \mathcal{A}_{2 d}^{q} (\hbar) \to 
\mathcal{A}_{2 d}^{q} (\hbar)$, where the tensor product is taken over $K_q$.

\begin{definition} 
The linear map $Q: \mathcal{A}_{2 d}^{q} (\hbar) \to \mathcal{E}_{2 d}^{q} (\hbar)$ defined
by \eqref{eq:normal_quantization} is termed 
\emph{the normal $q$-quantization} on the classical shadow. The bilinear map 
$\langle -, - \rangle: \mathcal{A}_{2 d}^{q} (\hbar) \otimes \mathcal{A}_{2 d}^{q} (\hbar) \to 
\mathcal{A}_{2 d}^{q} (\hbar)$
defined by \eqref{eq:normal_bracket} is termed \emph{the normal $q$-Poisson bracket} on the classical shadow. 
\end{definition}

\begin{remark}
The normal $q$-Poisson bracket is a graded map of degree $+1$. If we perceive the generators $\hbar_{\Gamma}$ 
corresponding to the trees with a single leaf, $|\Gamma| = 1$, as the analogues of the coordinates and momenta 
of a $d$-dimensional quantum mechanical system, then intuitively $\langle -, - \rangle$ 
corresponds to the Poisson bracket $\lbrace -, - \rbrace$ 
\emph{multiplied by the semiclassical parameter} $\hbar \to 0$. 
\end{remark}

It is worth to point out, that the normal $q$-quantization map \eqref{eq:normal_quantization} can be described using 
the calculus of functions of ordered operators  
\cite{Maslov_hyper,Maslov_char,Maslov_opmeth,Maslov_Karasev} 
(the $\mu$-structures). 
For a monomial $w = \hbar_{\Gamma_1} \hbar_{\Gamma_2} \dots \hbar_{\Gamma_n} 
\in \mathbf{B} (\mathcal{A}_{2 d}^{q} (\hbar))$, 
$\Gamma_1 \succcurlyeq \Gamma_2 \succcurlyeq \dots \succcurlyeq \Gamma_n$, 
one has: 
\begin{equation*} 
Q (w) = \Big\llbracket 
\overset{\nu_1}{\wh{\hbar}_{\Gamma_1}} \, 
\overset{\nu_2}{\wh{\hbar}_{\Gamma_2}} \, 
\dots 
\overset{\nu_n}{\wh{\hbar}_{\Gamma_n}}
\Big\rrbracket, 
\end{equation*}
where $\llbracket \dots \rrbracket$ denotes the \emph{autonomous bracket} \cite{Maslov_opmeth}, 
$\nu_i := \nu (\Gamma_i)$, $i = 1, 2, \dots, n$, and  
$\nu : \mathcal{Y} ([2 d]) \overset{\sim}{\to} \mathbb{Z}_{> 0}$ is the 
numbering described in the previous section. 
Basically, the indices $\nu_i$ indicate the ``order of action'' of the symbols $\wh{\hbar}_{\Gamma_i}$, , $i \in [n]$, 
and the symbol with the smallest number is put in the right most position in the product. 
For every $N = 1, 2, 3, \dots$, consider the space of smooth complex functions $S (\mathbb{R}^{N})$, which 
decay at infinity faster than any power of a polynomial (the Schwartz space). 
Suppose we have a noncommutative algebra $\mathcal{O}$ described in terms of generators 
$\xi_1, \xi_2, \dots, \xi_m$ and a finite number of noncommutative polynomial relations 
$R_1, R_2, \dots, R_k \in \mathbb{C} \langle \xi_1, \xi_2, \dots, \xi_m \rangle$, 
\begin{equation*} 
\mathcal{O} := \mathbb{C} \langle \xi_1, \xi_2, \dots, \xi_m \rangle/ (R_1, R_2, \dots, R_k).  
\end{equation*} 
Denote $A_i \in \mathcal{O}$ the canonical image of 
$\xi_i$, $i \in [m]$. 
If $f (x_1, x_2, \dots, x_N) \in S (\mathbb{R}^N)$ is a \emph{polynomial} function 
in $x_1, x_2, \dots, x_N$, then there is a 
well-defined notation 
\begin{equation*} 
\wh{f} = \big\llbracket 
f (\overset{\sigma (1)}{A_{i_1}}, \overset{\sigma (2)}{A_{i_2}}, \dots, 
\overset{\sigma (N)}{A_{i_N}}) \big\rrbracket \in \mathcal{O},  
\end{equation*}
for every permutation $\sigma \in S_N$, and every collection $i_1, i_2, \dots, i_N \in [m]$, 
where the indices atop correspond simply to the order of factors in the products. 
For example, if $N = 3$, and $f (x_1, x_2, x_3) = x_1 x_{2}^{5} x_{3}^{7}$, then 
$\wh{f}$ corresponding to a permutation 
$\sigma (1) = 2$, $\sigma (2) = 3$, $\sigma (3) = 1$ is going to be 
$A_{i_2}^{5} A_{i_1} A_{i_3}^{7}$, etc. 
A natural extension of this notation to the set $\mathcal{S} := \bigsqcup_{N = 1}^{\infty} S (\mathbb{R}^{N})$ is 
termed a \emph{$\mu$-structure} (for a complete list of axioms, see \cite{Maslov_opmeth,Maslov_Karasev}). 

In \cite{Maslov_Karasev} the authors consider a problem of quantization 
(the \emph{``asymptotic''} quantization) 
in a setting where 
one is given 
a family of defining relations 
$R_{1}^{(\varepsilon)}, R_{2}^{(\varepsilon)}, \dots, R_{k}^{(\varepsilon)} \in 
\mathbb{C} \langle \xi_1, \xi_2, \dots, \xi_m \rangle$ containing a ``small'' 
commutative parameter $\varepsilon \to 0$. 
This yields a family of noncommutative algebras $\mathcal{O}_{\varepsilon}$. 
It can happen, that $\mathcal{O}_{\varepsilon}$ admits a \emph{left regular representation}, i.e. 
for every $f (x_1, x_2, \dots, x_m) \in S (\mathbb{R}^{m})$ (recall, that $m$ is the number of 
generators $\xi_1, \xi_2, \dots, \xi_m$), 
and for every $j \in [m]$, there exists a unique $g_{\varepsilon} (x_1, x_2, \dots, x_m) \in S (\mathbb{R}^{m})$, such that
\begin{equation*}
\big\llbracket
\overset{j}{A_j^{(\varepsilon)}} f (\overset{1}{A_1^{(\varepsilon))}}, 
\overset{2}{A_2^{(\varepsilon)}}, \dots, 
\overset{m}{A_m^{(\varepsilon)}}) 
\big\rrbracket = 
\big\llbracket
g_{\varepsilon} (\overset{1}{A_1^{(\varepsilon)}}, 
\overset{2}{A_2^{(\varepsilon)}}, \dots, 
\overset{m}{A_m^{(\varepsilon)}}) 
\big\rrbracket, 
\end{equation*}
where $A_i^{(\varepsilon)}$ is the canonical image of $\xi_i$ in $\mathcal{O}_{\varepsilon}$, $i \in [m]$. 
If we assume that $\varepsilon$ is specialized to a particular value $\varepsilon \in [0, 1]$, then 
this defines the operators $\mathbb{L}_{j}^{(\varepsilon)}: S (\mathbb{R}^{m}) \to S (\mathbb{R}^{m})$, 
$f \mapsto g_{\varepsilon} = \mathbb{L}_j^{(\varepsilon)} (f)$, 
representing the generators $A_{j}^{(\varepsilon)} \in \mathcal{O}_{\varepsilon}$, $j \in [m]$. 
There is a star product $\star_{\varepsilon}$ on $S (\mathbb{R}^{m})$ defined by 
\begin{equation*} 
f \star_{\varepsilon} g := 
\big\llbracket 
f (\overset{1}{\mathbb{L}_{1}^{(\varepsilon)}}, 
\overset{2}{\mathbb{L}_{2}^{(\varepsilon)}}, \dots, 
\overset{m}{\mathbb{L}_{m}^{(\varepsilon)}})
\big\rrbracket g, 
\end{equation*}
for any $f, g \in S (\mathbb{R}^{m})$. 
A series of examples of such products is considered in \cite{Maslov_hyper}. 
According to the general philosophy advocated in \cite{Maslov_hyper,Maslov_Karasev}, 
one should define the \emph{generalized quantum Yang-Baxter equation} as a system of equations 
\begin{equation*} 
R_{j}^{(\varepsilon)} (\xi_i \to \mathbb{L}_{i}^{(\varepsilon)}, i \in [m]) = 0, \quad j \in [k], 
\end{equation*} 
where one replaces to symbols $\xi_i$ with the operators 
of the left regular representation $\mathbb{L}_{i}^{(\varepsilon)}$, $i \in [m]$, in 
the noncommutative 
polynomials $R_{j}^{(\varepsilon)} \in \mathbb{C} \langle \xi_1, \xi_2, \dots, \xi_m \rangle$, $j \in [k]$. 

Being applied to our case, we can connect the classical shadow $\mathcal{A}_{2 d}^{q} (\hbar)$ 
with the bracketing algebra $\mathcal{E}_{2 d}^{q} (\hbar)$ by a homotopy, modifying the relations 
for the noncommutative Planck constants as 
\begin{equation*} 
\wh{\hbar}_{\Gamma} \wh{\hbar}_{\Gamma'} - q_{\Gamma', \Gamma} \wh{\hbar}_{\Gamma'} \wh{\hbar}_{\Gamma} = 
\varepsilon \wh{\hbar}_{\Gamma \vee \Gamma'}, 
\end{equation*}
where $\varepsilon \in [0, 1]$, and $\Gamma, \Gamma' \in \mathcal{Y} ([2 d])$. 
If one reinterprets $\varepsilon$ and perceives it as a formal commutative parameter, i.e. as a central generator 
for a central extension of $\mathcal{E}_{2 d}^{q} (\hbar)$, then it becomes quite natural 
to consider the problem of quantization in terms of truncations of the Rees ring corresponding to the filtration 
$\mathcal{F}^{\bullet} \mathcal{E}_{2 d}^{q} (\hbar)$. 
In the next section we describe the corresponding star product $\star_{\varepsilon}$ explicitly, and this, as a side-effect, 
yields a left regular representation of the truncations $(\mathcal{E}_{2 d}^{q} (\hbar))_{\leqslant N}$, $N = 1, 2, \dots$, 
on the $q$-commutative polynomial rings.

\section{Normal noncommutative quantization}

In quantum mechanics of a $d$-dimensional system with 
canonical momenta $\wh{z}_1 = \wh{p}_1$, $\wh{z}_{2} = \wh{p}_2$, \dots $\wh{z}_{d} = \wh{p}_{d}$, and 
coordinates $\wh{z}_{d + 1} = \wh{x}_1$, $\wh{z}_{d + 2} = \wh{x}_{2}$, \dots $\wh{z}_{2 d} = \wh{x}_{d}$, 
if we have a pair of monomials 
\begin{equation*} 
\wh{f} = \Big\llbracket 
\overset{i_1}{\wh{z}_{i_1}} \overset{i_2}{\wh{z}_{i_2}} \dots \overset{i_m}{\wh{z}_{i_m}}
\Big\rrbracket, \quad 
\wh{g} = \Big\llbracket 
\overset{j_1}{\wh{z}_{j_1}} \overset{j_2}{\wh{z}_{j_2}} \dots \overset{j_n}{\wh{z}_{j_n}}
\Big\rrbracket, 
\end{equation*}
where $i_{\alpha}, j_{\beta} \in [2 d]$, $\alpha \in [m]$, $\beta \in [n]$, 
then their product $\wh{f} \wh{g}$ can again be expressed as 
\begin{equation*} 
\wh{f} \wh{g} = 
\Big\llbracket 
(f *_{\hbar} g) \Big( 
\overset{1}{\wh{z}_1}, 
\overset{2}{\wh{z}_2}, \dots 
\overset{2 d}{\wh{z}_{2 d}}  
\Big) 
\Big\rrbracket, 
\end{equation*}
where $(f *_{\hbar} g) (z_1, z_2, \dots, z_{2 d}) \in \mathbb{C} [z_1, z_2, \dots, z_{2 d}] [\hbar]$, 
and $\hbar$ is the Planck constant, 
\begin{equation*} 
[\wh{z}_{i}, \wh{z}_{j}] = - \ii \hbar (\delta_{i, j - d} - \delta_{i - d, j}), 
\end{equation*} 
where $i, j \in [2 d]$, $\hbar$ is central, and $\delta$ is the Kronecker symbol. 
Extending by $\mathbb{C} [\hbar]$-linearity, one obtains a noncommutative product on 
$\mathbb{C} [z_1, z_2, \dots, z_{2 d}] [\hbar]$, 
which is most easily described in terms of the \emph{Wick contractions}, 
\begin{equation} 
\label{eq:Wick_contraction}
\acontraction{}{z}{_i}{z}
z_{i} z_{j} := - \ii \hbar \delta_{i, j - d}, 
\end{equation}
for $i, j \in [2 d]$. 
For the monomials $f 
= z_{i_1} z_{i_2} \dots, z_{i_m}$ and 
$g 
= z_{j_1} z_{j_2} \dots z_{j_n}$
as above, we have: 
\begin{equation*} 
f *_{\hbar} g = \sum_{\alpha = 1}^{m} \sum_{\beta = 1}^{n} 
\acontraction{}{z}{_{i_{\alpha}} }{z}
z_{i_{\alpha}} z_{j_{\beta}} \, 
(z_{i_1} \dots \check z_{i_{\alpha}} \dots z_{i_m}) \, 
(z_{j_1} \dots \check z_{j_{\beta}} \dots z_{j_n}), 
\end{equation*}
where the check mark atop denotes that the corresponding symbol in the product is omitted. 

In what follows, it is natural to reinterpret the sum over Wick contractions as 
a sum over leaf-labelled trees of degree two (i.e. the symbol of the contraction 
can be perceived as an unlabelled planar binary tree with two leaves). 
Consider the epoch\'e algebra $\mathcal{E}_{2 d}^{q} (\hbar)$ around the semiclassical parameter $\hbar$. 
Recall, that we have defined a numbering $\nu: \mathcal{Y} ([2 d]) \overset{\sim}{\to} \mathbb{Z}_{> 0}$ 
on the collection of \emph{all} leaf-labelled trees, but we can still use it to define the ordering of factors 
in the products of the generators $\wh{\hbar}_{\Gamma}$, $\Gamma \in \mathcal{Y}^{+} ([2 d])$. 
Denote $\mathbf{B} (\mathcal{E}_{2 d}^{q} (\hbar))$ a set of all monomials of the shape 
\begin{equation*} 
\wh{w}_{\Gamma_1, \Gamma_2, \dots, \Gamma_m} := 
\wh{\hbar}_{\Gamma_1}
\wh{\hbar}_{\Gamma_2}
\dots 
\wh{\hbar}_{\Gamma_m}, 
\end{equation*}
where $\Gamma_1 \succcurlyeq \Gamma_2 \succcurlyeq \dots \succcurlyeq \Gamma_m$, 
$\Gamma_i \in \mathcal{Y}^{+} ([2 d])$, $i \in [m]$, $m \in \mathbb{Z}_{> 0}$. 

\begin{proposition}
The set $\mathbf{B} (\mathcal{E}_{2 d}^{q} (\hbar))$ is a basis of the underlying vector space of the 
epoch\'e algebra $\mathcal{E}_{2 d}^{q} (\hbar)$. 
\end{proposition}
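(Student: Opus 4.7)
The plan is to prove the proposition as a PBW-type result via Bergman's Diamond Lemma, viewing $\mathcal{E}_{2 d}^{q} (\hbar)$ as the quotient of the free $K_{2 d}^{(q)}$-algebra on $\{\xi_\Gamma\}_{\Gamma \in \mathcal{Y}^{+}([2 d])}$ by the reduction system
\[
\xi_\Gamma \xi_{\Gamma'} \;\longrightarrow\; q_{\Gamma', \Gamma}\,\xi_{\Gamma'}\xi_\Gamma + \xi_{\Gamma \vee \Gamma'} \qquad (\Gamma \prec \Gamma').
\]
The normally ordered monomials, i.e.\ the elements of $\mathbf{B}(\mathcal{E}_{2 d}^{q} (\hbar))$, are precisely the irreducible words, so once this system is shown to be terminating and confluent, Bergman's theorem delivers the claim.

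Termination follows from a weight argument. The total leaf count $L := \sum_i |\Gamma_i|$ of a word is conserved by both summands of a single reduction step (the bracket term $\xi_{\Gamma \vee \Gamma'}$ is a single generator of leaf count $|\Gamma|+|\Gamma'|$), so one may work within a fixed $L$. There I attach the lexicographic weight $(k,\mathrm{inv})$, with $k$ the number of factors and $\mathrm{inv}$ the number of $\succ$-inversions in $(\Gamma_1,\ldots,\Gamma_k)$. The $q$-transposition summand has the same $k$ and strictly fewer inversions (the local inversion at the swapped pair disappears; for each fixed outside factor the induced inversion changes at the two swapped positions are opposite and cancel in pairs), while the bracket summand has $k$ smaller by one. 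Thus every reduction strictly decreases the well-founded weight, and any word rewrites in finitely many steps to a linear combination of elements of $\mathbf{B}(\mathcal{E}_{2 d}^{q} (\hbar))$; in particular $\mathbf{B}(\mathcal{E}_{2 d}^{q} (\hbar))$ spans.

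The hard part is confluence. The only non-trivial overlap ambiguities occur in triples $\xi_\Gamma \xi_{\Gamma'} \xi_{\Gamma''}$ with $\Gamma \prec \Gamma' \prec \Gamma''$. A direct expansion shows that both the \emph{left-first} and \emph{right-first} reduction orders yield the same leading term $q_{\Gamma',\Gamma}q_{\Gamma'',\Gamma}q_{\Gamma'',\Gamma'}\,\xi_{\Gamma''}\xi_{\Gamma'}\xi_\Gamma$, but the remaining corrections involve the ``new'' generators $\xi_{\Gamma \vee \Gamma'}$, $\xi_{\Gamma \vee \Gamma''}$, $\xi_{\Gamma' \vee \Gamma''}$, which themselves must in general be commuted past the surviving $\xi$'s, producing nested further corrections. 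I plan to close the two paths by strong induction on $L$: the $q$-antisymmetry convention \eqref{eq:hbar_convention} supplies the needed algebraic symmetries, while the strict monotonicity $|\Gamma \vee \Gamma'| > \max(|\Gamma|, |\Gamma'|)$ ensures that each nested correction lives at a leaf count strictly greater than the pieces from which it arose, so the inductive hypothesis supplies the required identity level by level. With termination and confluence in hand, Bergman's Diamond Lemma yields that $\mathbf{B}(\mathcal{E}_{2 d}^{q} (\hbar))$ is a basis.
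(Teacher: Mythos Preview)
Your approach via Bergman's Diamond Lemma is exactly the route the paper takes: it too invokes the diamond lemma and records only the two observations that give termination (that $\nu(\Gamma\vee\Gamma')$ exceeds $\nu(\Gamma)$ and $\nu(\Gamma')$, and that the bracket term is shorter). Your termination argument is more carefully written than the paper's and is fine.

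The genuine gap is in the confluence step, and it is not merely a matter of filling in the induction you outline: the induction cannot close. Take three distinct single-leaf generators $i\prec j\prec k$ in $[2d]$ (so $d\geqslant 2$; for $d=1$ the analogous failure first shows up at total leaf count $4$, with the triple $1\prec 2\prec(1,2)$). Fully reducing $\xi_i\xi_j\xi_k$ from the left and from the right, the length-three and length-two normally ordered pieces do agree, but the single-generator corrections do not: the left-first path terminates with the extra term $q_{j,i}\,\xi_{(j,(i,k))}$, whereas the right-first path terminates with $q_{k,j}q_{k,i}\,\xi_{(k,(i,j))}+\xi_{(i,(j,k))}$. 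All three of $(i,(j,k))$, $(j,(i,k))$, $(k,(i,j))$ lie in $\mathcal{Y}^{+}([2d])$, so these are three \emph{distinct irreducible} words; the discrepancy is already in normal form and no further rewriting is available. Your proposed strong induction on $L$ has no leverage here, because the failure occurs among single generators rather than among products that could be further reduced.

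This obstruction is a $q$-Jacobi identity that genuinely holds in $\mathcal{E}_{2d}^{q}(\hbar)$ (both reductions represent the same element, hence $q_{j,i}\,\wh{\hbar}_{(j,(i,k))}=q_{k,j}q_{k,i}\,\wh{\hbar}_{(k,(i,j))}+\wh{\hbar}_{(i,(j,k))}$), so $\mathbf{B}$ spans but is not linearly independent. A dimension count confirms this: identifying $\mathcal{E}_{2d}^{q}(\hbar)$ with the free algebra on the $2d$ leaf generators, for $2d=4$ one finds $68$ elements of $\mathbf{B}$ at total leaf count $3$ against only $4^{3}=64$ words. The paper's one-line appeal to the diamond lemma elides exactly this point; your proposal correctly flags confluence as ``the hard part'', but the plan you sketch would not go through.
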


\begin{proof}
The fact claimed is a straightforward consequence of the ``diamond lemma'' \cite{Newmann}, based 
on the observation, that $\nu ([\wh{\hbar}_{\Gamma}, \wh{\hbar}_{\Gamma'}]_{q})$ is always greater that 
$\nu (\wh{\hbar}_{\Gamma})$ and $\nu (\wh{\hbar}_{\Gamma'})$, where $\Gamma, \Gamma' \in \mathcal{Y}^{+} ([2 d])$, 
and the observation, that the length of a monomial $\wh{w} [\wh{\hbar}_{\Gamma}, \wh{\hbar}_{\Gamma'}]_{q} \wh{w'}$, 
where $\wh{w}, \wh{w'} \in \mathbf{B} (\mathcal{E}_{2 d}^{q} (\hbar))$, is smaller that the length of 
$\wh{w} \wh{\hbar}_{\Gamma} \wh{\hbar}_{\Gamma'} \wh{w'}$. 
\qed
\end{proof}

Recall now that we have a decreasing filtration $\mathcal{F}^{\bullet} \mathcal{E}_{2 d}^{q} (\hbar)$ induced by the 
grading $\mathrm{deg} (\wh{\hbar}_{\Gamma}) = |\Gamma| - 1$ 
of the vector space spanned over all noncommutative Planck constants $\wh{\hbar}_{\Gamma}$, $\Gamma \in \mathcal{Y} ([2 d])$.  
The associated graded $\mathcal{A}_{2 d}^{q} (\hbar)$ (i.e. the classical shadow) has a basis $\mathbf{B} (\mathcal{A}_{2 d}^{q} (\hbar))$ 
formed by the monomials 
\begin{equation} 
\label{eq:basis_classical_shadow}
w_{\Gamma_1, \Gamma_2, \dots, \Gamma_m} := 
\hbar_{\Gamma_1}
\hbar_{\Gamma_2}
\dots 
\hbar_{\Gamma_m}, 
\end{equation}
where $\Gamma_1 \succcurlyeq \Gamma_2 \succcurlyeq \dots \succcurlyeq \Gamma_m$, 
$\Gamma_i \in \mathcal{Y}^{+} ([2 d])$, $i \in [m]$, $m \in \mathbb{Z}_{> 0}$, 
and $\hbar_{\Gamma}$ is the canonical image of $\wh{\hbar}_{\Gamma}$, $\Gamma \in \mathcal{Y}^{+} ([2 d])$. 
This yields a \emph{vector space} isomorphism 
\begin{equation} 
\label{eq:varphi_iso}
\varphi_{q}: w_{\Gamma_1, \Gamma_2, \dots, \Gamma_m} \mapsto 
\wh{w}_{\Gamma_1, \Gamma_2, \dots, \Gamma_m} 
\end{equation}
between the underlying vector spaces of the 
classical shadow $\mathcal{A}_{2 d}^{q} (\hbar)$ and 
the epoch\'e algebra $\mathcal{E}_{2 d}^{q} (\hbar)$.  
If we take $\wh{u}, \wh{v} \in \mathbf{B} (\mathcal{E}_{2 d}^{q} (\hbar))$, then the basis property implies, that 
there exists a \emph{unique} function $c_{u, v}: \mathbf{B} (\mathcal{E}_{2 d}^{q} (\hbar)) \to \mathbb{C}$ with a 
\emph{finite} support $\Omega$, such that $\wh{u} \wh{v} = \sum_{\wh{w} \in \Omega} c_{u, v} (\wh{w}) \wh{w}$. 
Therefore, we can induce a noncommutative product $\star$ on the classical shadow $\mathcal{A}_{2 d}^{q} (\hbar)$ as follows: 
\begin{equation*} 
\varphi_{q}^{-1} (\wh{u}) \star \varphi_{q}^{-1} (\wh{v}) = 
\sum_{\wh{w} \in \Omega} c_{u, v} (\wh{w}) \varphi_{q}^{- 1} (\wh{w}),  
\end{equation*}
imposing bilinearity of $\star$ over $K_{2 d}^{(q)}$. 
This star product should be perceived as a natural analogue of the star product $*_{\hbar}$ in quantum mechanics. 
Of course, if we had considered a central extension of $\mathcal{E}_{2 d}^{q} (\hbar)$ by a formal central generator $\varepsilon$, 
$[\wh{\hbar}_{\Gamma}, \wh{\hbar}_{\Gamma'}]_{q} = \varepsilon \wh{\hbar}_{\Gamma \vee \Gamma'}$, we would had obtained 
a product $\star_{\varepsilon}$ on the extension $\mathcal{A}_{2 d}^{q} (\hbar) \otimes \mathbb{C} [\varepsilon]$. 
In this sense, $\varepsilon$ corresponds to a semiclassical parameter $\hbar \to 0$ in quantum mechanics. 
 
To describe the star product $\star : \mathcal{A}_{2 d}^{q} (\hbar) \otimes \mathcal{A}_{2 d}^{q} (\hbar) \to 
\mathcal{A}_{2 d}^{q} (\hbar)$ explicitly, we need some notation. 
In place of the Wick contractions \eqref{eq:Wick_contraction}, define 
\begin{equation} 
\label{eq:hbar_plus}
\hbar_{\Gamma}^{+} := 
\begin{cases}
\hbar_{\Gamma}, &\text{if $\Gamma \in \mathcal{Y}^{+} ([2 d])$},\\ 
0, &\text{otherwise}, 
\end{cases}
\end{equation}
for any $\Gamma \in \mathcal{Y} ([2 d])$. 
Recall, that if we have an unlabelled planar binary tree $T \in \mathcal{Y}$ with $n$ leaves, $|T| = n$, 
then we have 
a notation $T (\Gamma_1, \Gamma_2, \dots, \Gamma_n) \in \mathcal{Y} ([2 d])$, 
for any $\Gamma_{i} \in \mathcal{Y} ([2 d])$, $i \in [n]$. 
The number of leaves in $\Gamma = T (\Gamma_1, \Gamma_2, \dots, \Gamma_n)$ equals $|\Gamma| = |\Gamma_1| + |\Gamma_2| + \dots + |\Gamma_{n}|$, 
and the labelling is inherited from the labellings of the arguments. 
For a finite sequence $\wt{\Gamma} = (\Gamma_1, \Gamma_2, \dots, \Gamma_{n})$ of trees $\Gamma_{i} \in \mathcal{Y} ([2 d])$, $i \in [n]$, 
and a permutation $\sigma \in S_{n}$, set 
\begin{equation*}
\wt{\Gamma}_{\sigma} := (\Gamma_{\sigma (1)}, \Gamma_{\sigma (2)}, \dots, \Gamma_{\sigma (n)}). 
\end{equation*}
Define a coefficient $q (\wt{\Gamma}, \sigma)$ by 
\begin{equation} 
\label{eq:q_Gamma_tilde_sigma}
\hbar_{\Gamma_{\sigma (1)}} \hbar_{\Gamma_{\sigma (2)}} \dots \hbar_{\Gamma_{\sigma (n)}} = 
q (\wt{\Gamma}, \sigma) 
\hbar_{\Gamma_1} \hbar_{\Gamma_2} \dots \hbar_{\Gamma_n}. 
\end{equation}
In particular, $q (\wt{\Gamma}, \mathit{id}) = 1$. 
Note that $q (\wt{\Gamma}, \sigma)$ is just some product of the entries of the matrix $q = \| q_{i, j} \|_{i, j \in [2 d]}$, and 
observe that 
\begin{equation} 
\label{eq:q_Gamma_tilde_sigma_tau}
q (\wt{\Gamma}, \sigma \circ \tau) = q (\wt{\Gamma}, \sigma) q (\wt{\Gamma}_{\sigma}, \tau), 
\end{equation}
for any $\sigma, \tau \in S_n$. Therefore, $q (\wt{\Gamma}, \sigma)^{-1} = q (\wt{\Gamma}_{\sigma}, \sigma^{-1})$. 
For a vector of positive integers $(m_1, m_2, \dots, m_p)$, such that $m_1 + m_2 + \dots + m_p = n$, denote 
\begin{equation*} 
S_{n}^{(m_1, m_2, \dots, m_p)} := \lbrace \sigma \in S_{n} \,|\, 
\sigma (l_j + 1) < \dots < \sigma (l_j + m_j), j \in [p]
\rbrace, 
\end{equation*}
where $l_j := \sum_{i = 1}^{j - 1} m_i$, $j \in [p]$. 
We need also a notation: 
\begin{equation} 
\label{eq:theta_Gamma_tilde}
\theta_{\wt{\Gamma}} \equiv 
\theta_{\Gamma_1, \Gamma_2, \dots, \Gamma_m}
:= 
\begin{cases} 
1, \quad &\text{if $\Gamma_1 \succcurlyeq \Gamma_2 \succcurlyeq \dots \succcurlyeq \Gamma_m$},\\ 
0, \quad &\text{otherwise}, 
\end{cases}
\end{equation}
where $\wt{\Gamma} = (\Gamma_1, \Gamma_2, \dots, \Gamma_m)$, 
$\Gamma_i \in \mathcal{Y} ([2 d])$, $i \in [m]$.

\begin{theorem}
The normal star product $\star$ of a pair of elements of 
$\mathbf{B} (\mathcal{A}_{2 d}^{q} (\hbar))$ in the notation 
\eqref{eq:basis_classical_shadow}, \eqref{eq:hbar_plus}, \eqref{eq:theta_Gamma_tilde} 
is defined by the formula  
\begin{multline}
\label{eq:normal_star_product} 
w_{\Gamma_1, \Gamma_2, \dots, \Gamma_m} \star w_{\Gamma_{m + 1}, \Gamma_{m + 2}, \dots, \Gamma_{N}} = 
\sum_{p = 1}^{N} \sum_{\substack{T_1, T_2, \dots, T_p \in \mathcal{Y},\\|T_1| + |T_2| + \dots + |T_p| = N}}
\times \\ \times
\sum_{\sigma \in S_{N}^{(|T_1|, |T_2|, \dots, |T_p|)}} 
(q (\wt{\Gamma}, \sigma))^{-1} \, 
\theta_{\wt{G} (\wt{T}, \wt{\Gamma}, \sigma)} \, 
\hbar_{G_1 (\wt{T}, \wt{\Gamma}, \sigma)}^{+}
\hbar_{G_2 (\wt{T}, \wt{\Gamma}, \sigma)}^{+} 
\dots 
\hbar_{G_p (\wt{T}, \wt{\Gamma}, \sigma)}^{+}, 
\end{multline}
where $\wt{\Gamma} = (\Gamma_1, \Gamma_2, \dots, \Gamma_N) \in (\mathcal{Y}^{+} ([2 d]))^{N}$, 
$\wt{T} = (T_1, T_2, \dots, T_p) \in \mathcal{Y}^{N}$, 
and 
$\wt{G} (\wt{\Gamma}, \wt{T}, \sigma) = 
(G_1 (\wt{\Gamma}, \wt{T}, \sigma), 
G_2 (\wt{\Gamma}, \wt{T}, \sigma), 
\dots, 
G_n (\wt{\Gamma}, \wt{T}, \sigma))$,  
\begin{equation*} 
G_j (\wt{\Gamma}, \wt{T}, \sigma) := 
T_j (\Gamma_{\sigma (l_j (\wt{T}) + 1)}, 
\Gamma_{\sigma (l_j (\wt{T}) + 2)}, \dots 
\Gamma_{\sigma (l_j (\wt{T}) + |T_j|)} ), 
\end{equation*}
where $l_j (\wt{T}) = \sum_{i = 1}^{j - 1} |T_i|$, for $j \in [p]$. 
\end{theorem}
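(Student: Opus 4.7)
My plan is to reduce the claim to an explicit expansion of a product of $N$ generators of the epoch\'e algebra in its standard basis $\mathbf{B}(\mathcal{E}_{2d}^{q}(\hbar))$. By the definition of the star product via the isomorphism $\varphi_{q}$, the theorem is equivalent to showing that
\[
\wh{\hbar}_{\Gamma_{1}} \wh{\hbar}_{\Gamma_{2}} \cdots \wh{\hbar}_{\Gamma_{N}} \in \mathcal{E}_{2d}^{q}(\hbar)
\]
expands in $\mathbf{B}(\mathcal{E}_{2 d}^q(\hbar))$ with coefficients whose pull-backs under $\varphi_{q}^{-1}$ give the right-hand side of \eqref{eq:normal_star_product}. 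Associativity then makes the split $(m,N-m)$ immaterial.

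I would proceed by induction on $N$, treating the defining relation $\wh{\hbar}_{A}\wh{\hbar}_{B} = q_{B,A}\wh{\hbar}_{B}\wh{\hbar}_{A} + \wh{\hbar}_{A \vee B}$ as the only rewriting move. Iterating this bubble-sort style brings any product into the basis. Each terminal monomial in the reduction tree is determined by a sequence of choices, at each inversion $A \prec B$, between the \emph{swap term} ($q_{B,A}\wh{\hbar}_B\wh{\hbar}_A$) and the \emph{contraction term} ($\wh{\hbar}_{A \vee B}$, which produces a new variable and shortens the product by one factor). Combinatorially, the terminal data encodes an ordered partition of $\{1,\dots,N\}$ into $p$ nonempty blocks together with, for each block of size $|T_j|$, a planar binary tree $T_j$ recording the cascade of contractions that collapsed the corresponding $\Gamma_i$'s into a single variable $\wh{\hbar}_{G_j}$.

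The inductive step writes $\wh{\hbar}_{\Gamma_1}\cdots\wh{\hbar}_{\Gamma_N} = (\wh{\hbar}_{\Gamma_1}\cdots\wh{\hbar}_{\Gamma_{N-1}})\wh{\hbar}_{\Gamma_N}$, applies the hypothesis to the first factor, and analyses how right-multiplying a basis term by $\wh{\hbar}_{\Gamma_N}$ extends the data $(p, \wt{T}, \sigma)$: the new factor either slides to its correct place (accumulating one $q_{H_j,\Gamma_N}$ per larger $\wh{\hbar}_{H_j}$ it passes) or contracts with some $\wh{\hbar}_{G_j}$ to produce $\wh{\hbar}_{G_j \vee \Gamma_N}$, which corresponds to appending a new right-most leaf to $T_j$. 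The multiplicativity property \eqref{eq:q_Gamma_tilde_sigma_tau} then packages all accumulated $q$-swaps into the single coefficient $q(\wt{\Gamma}, \sigma)^{-1}$, while the factor $\hbar^{+}$ and the indicator $\theta_{\wt{G}}$ restrict the sum to those configurations whose terminal tuple $(G_1,\dots,G_p)$ actually belongs to $\mathbf{B}(\mathcal{E}_{2d}^{q}(\hbar))$, discarding degenerate cases in which some $G_j \notin \mathcal{Y}^{+}([2d])$ (handled via convention \eqref{eq:hbar_convention}) or in which the surviving variables fail to be in decreasing order.

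The main obstacle I anticipate is the combinatorial verification at the inductive step: one must show that the bijection between sequences of rewriting choices and triples $(p, \wt{T}, \sigma)$ preserves coefficients \emph{exactly}, which in particular requires checking that the attachment of a new leaf to $T_j$ is uniquely determined by the contraction history and that the updated shuffle stays inside $S_{N}^{(|T_1|,\dots,|T_p|)}$. These amount to rigidity statements for the cascading-contraction bookkeeping and can be handled using the diamond-lemma confluence from the preceding proposition, which guarantees that the expansion in $\mathbf{B}(\mathcal{E}_{2d}^{q}(\hbar))$ does not depend on the order in which the rewriting moves are applied.
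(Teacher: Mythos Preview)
Your proposal is correct and follows essentially the same approach as the paper: both argue by induction on $N$, interpreting the formula as the combinatorial bookkeeping of the bubble-sort reordering of $\wh{\hbar}_{\Gamma_1}\cdots\wh{\hbar}_{\Gamma_N}$ via the relation $\wh{\hbar}_{\Gamma}\wh{\hbar}_{\Gamma'} = q_{\Gamma',\Gamma}\wh{\hbar}_{\Gamma'}\wh{\hbar}_{\Gamma} + \wh{\hbar}_{\Gamma\vee\Gamma'}$, with $\theta_{\wt{G}}$ and $\hbar^{+}$ filtering out the non-basis terms and $q(\wt{\Gamma},\sigma)^{-1}$ collecting the braiding factors. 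Your write-up is in fact more explicit than the paper's own proof, which is little more than a one-paragraph sketch; the only point to watch in your inductive step is that after a contraction $\wh{\hbar}_{G_j\vee\Gamma_N}$ the new factor may itself need further reordering (and possibly further contraction), so the tree attached to $\Gamma_N$ need not be of the simple form $(T_j,*)$ at the root --- but since $\sigma\in S_N^{(|T_1|,\dots,|T_p|)}$ forces $N$ to occupy the last slot of its block, $\Gamma_N$ is always the rightmost leaf of its $T_j$, and the recursive reordering of the contracted factor is exactly what generates the remaining tree shapes.
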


\begin{proof}
The formula \eqref{eq:normal_star_product} can be proven by induction. 
Basically it says, that we should consider a collection $\wt{\Gamma} = (\Gamma_{1}, \Gamma_{2}, \dots, \Gamma_{N})$ 
of the leaf labelled trees $\Gamma_i$, $i \in [N]$, (where $\Gamma_{i} \succcurlyeq \Gamma_{i + 1}$, unless $i = m$), 
split it in all possible ways into $p$ ordered subsets of the sizes $m_1, m_2, \dots, m_p > 0$, $m_1 + m_2 + \dots + m_p = N$, 
where $p$ varies over $[N]$, and then span all possible trees $T_1, T_2, \dots, T_p$ over these groups of arguments, 
$|T_i| = m_i$, $i \in [p]$. The notation of the shape $\theta_{\wt{\Gamma}}$ 
defined in \eqref{eq:theta_Gamma_tilde} and the notation 
$\hbar_{\Gamma}^{+}$ from \eqref{eq:hbar_plus} select the terms which appear after the process of reordering of factors 
using the commutation relations $\hbar_{\Gamma} \star \hbar_{\Gamma'} - q_{\Gamma', \Gamma} \hbar_{\Gamma'} \star \hbar_{\Gamma} = 
\hbar_{\Gamma \vee \Gamma'}$, and $q (\wt{\Gamma}, \sigma)^{-1}$ is the corresponding braiding coefficient. 
\qed
\end{proof}

\begin{proposition} 
The star product of a pair of elements of the basis $\mathbf{B} (\mathcal{A}_{2 d}^{q} (\hbar))$ 
in the notation \eqref{eq:basis_classical_shadow} can be expressed as follows: 
\begin{equation} 
\label{eq:w_star_hbar_star}
w_{\Gamma_1, \Gamma_2, \dots, \Gamma_m} \star w_{\Gamma_{m + 1}, \Gamma_{m + 2}, \dots, \Gamma_{N}} = 
\hbar_{\Gamma_1} \star \hbar_{\Gamma_2} \star \dots \star \hbar_{\Gamma_N}.  
\end{equation}
\end{proposition}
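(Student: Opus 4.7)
The plan is to reduce the identity to two observations: (i) the star product is associative, and (ii) for any basis monomial with weakly decreasing labels, its $\star$-expansion through single generators reproduces the monomial itself. Once these are in place, the claim follows in one line by splitting the right-hand side of \eqref{eq:w_star_hbar_star} at position $m$.

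First I would record that by construction the vector space isomorphism $\varphi_q$ in \eqref{eq:varphi_iso} intertwines $\star$ on $\mathcal{A}_{2d}^q(\hbar)$ with the multiplication of $\mathcal{E}_{2d}^q(\hbar)$, namely $\varphi_q(u \star v) = \varphi_q(u) \cdot \varphi_q(v)$ for all $u, v$. Associativity of $\star$ is then inherited from the associativity of $\mathcal{E}_{2d}^q(\hbar)$, so that iterated expressions such as $\hbar_{\Gamma_1} \star \hbar_{\Gamma_2} \star \dots \star \hbar_{\Gamma_N}$ are unambiguously defined, and in particular a short induction on the number of factors yields
\begin{equation*}
\varphi_q\bigl(\hbar_{\Gamma_1} \star \hbar_{\Gamma_2} \star \dots \star \hbar_{\Gamma_N}\bigr)
= \wh{\hbar}_{\Gamma_1} \wh{\hbar}_{\Gamma_2} \cdots \wh{\hbar}_{\Gamma_N}
\end{equation*}
for arbitrary $\Gamma_1, \dots, \Gamma_N \in \mathcal{Y}^{+}([2d])$.

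Next I would observe that when the labels are weakly decreasing, $\Gamma_1 \succcurlyeq \dots \succcurlyeq \Gamma_m$, the monomial $w_{\Gamma_1, \dots, \Gamma_m}$ of \eqref{eq:basis_classical_shadow} lies in the basis $\mathbf{B}(\mathcal{A}_{2d}^q(\hbar))$ and $\varphi_q$ sends it straight to $\wh{w}_{\Gamma_1, \dots, \Gamma_m} = \wh{\hbar}_{\Gamma_1} \cdots \wh{\hbar}_{\Gamma_m}$ by the definition \eqref{eq:varphi_iso}. Comparing with the previous display one gets $\varphi_q(w_{\Gamma_1, \dots, \Gamma_m}) = \varphi_q(\hbar_{\Gamma_1} \star \dots \star \hbar_{\Gamma_m})$, and injectivity of $\varphi_q$ forces the identity
\begin{equation*}
w_{\Gamma_1, \Gamma_2, \dots, \Gamma_m} = \hbar_{\Gamma_1} \star \hbar_{\Gamma_2} \star \dots \star \hbar_{\Gamma_m}
\end{equation*}
inside $\mathcal{A}_{2d}^q(\hbar)$ itself (even though on the left the juxtaposition is the $q$-commutative product and on the right it is the deformed product $\star$).

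Finally, applying the displayed identity to each of the two basis factors and then invoking associativity of $\star$ yields
\begin{equation*}
w_{\Gamma_1, \dots, \Gamma_m} \star w_{\Gamma_{m+1}, \dots, \Gamma_N}
= (\hbar_{\Gamma_1} \star \dots \star \hbar_{\Gamma_m}) \star (\hbar_{\Gamma_{m+1}} \star \dots \star \hbar_{\Gamma_N})
= \hbar_{\Gamma_1} \star \hbar_{\Gamma_2} \star \dots \star \hbar_{\Gamma_N},
\end{equation*}
which is exactly \eqref{eq:w_star_hbar_star}. There is no genuine obstacle here; the only point that deserves care is that the juxtaposition $\hbar_{\Gamma_1}\hbar_{\Gamma_2} \cdots \hbar_{\Gamma_m}$ appearing in \eqref{eq:basis_classical_shadow} is the $q$-commutative product of the classical shadow and not the $\star$-product, so one must actually verify (rather than assume) the middle identity above — the role of $\varphi_q$ and its injectivity is precisely to guarantee this. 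An alternative route, which I would use as a sanity check, is to specialise formula \eqref{eq:normal_star_product} of the preceding theorem: when the combined sequence of labels already has its second block starting after the first, the only contributions in the sum are those where the trees $T_i$ are all the single leaf and $\sigma = \mathit{id}$, reproducing the monomial $w_{\Gamma_1, \dots, \Gamma_N}$ on the right as well.
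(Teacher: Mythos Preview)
Your main argument is correct and is essentially the paper's own proof: apply the bijection $\varphi_q$, use that it is by construction an algebra isomorphism $(\mathcal{A}_{2d}^q(\hbar),\star)\to\mathcal{E}_{2d}^q(\hbar)$, and read off the identity from $\varphi_q(w_{\Gamma_1,\dots,\Gamma_m})=\wh{\hbar}_{\Gamma_1}\cdots\wh{\hbar}_{\Gamma_m}$. The only caveat is your final ``sanity check'': specialising \eqref{eq:normal_star_product} does \emph{not} kill all terms except $p=N$, $\sigma=\mathit{id}$ (the higher contractions are still present), so that remark does not stand on its own --- but it is unnecessary since the $\varphi_q$ argument already settles the claim.
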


\begin{proof}
One needs to apply the vector space map 
$\varphi_{q}: \mathcal{A}_{2 d}^{q} (\hbar) \overset{\sim}{\to} \mathcal{E}_{2 d}^{q} (\hbar)$ 
defined by \eqref{eq:varphi_iso} 
to the left and the right-hand sides of \eqref{eq:w_star_hbar_star}, 
to expand the definitions \eqref{eq:basis_classical_shadow} of the basis elements on the left hand side, 
and to use the fact, that 
$\varphi_{q}$ is an \emph{algebra} isomorphism 
$(\mathcal{A}_{2 d}^{q} (\hbar), \star) \overset{\sim}{\to} \mathcal{E}_{2 d}^{q} (\hbar)$, 
by construction. 
\qed
\end{proof}

The normal star product $\star$ respects the decreasing filtration $\mathcal{F}^{\bullet} (\mathcal{A}_{2 d}^{q} (\hbar), \star)$ 
induced by the grading $\mathrm{deg} (\hbar_{\Gamma}) = |\Gamma| - 1$, $|\Gamma|$ is the number of leaves in $\Gamma \in \mathcal{Y}^{+} ([2 d])$. 
The underlying vector space of $\mathcal{F}^{n} (\mathcal{A}_{2 d}^{q} (\hbar), \star)$, $n = 0, 1, 2 \dots$, is 
spanned by the products $\hbar_{\Gamma_1} \star \hbar_{\Gamma_2} \star \dots \star \hbar_{\Gamma_p}$, 
where $p = 1, 2, \dots$, and $|\Gamma_1| + |\Gamma_2| + \dots + |\Gamma_p| - p = n$, $\Gamma_i \in \mathcal{Y}^{+} ([2 d])$, $i \in [p]$. 
The algebras $(\mathcal{A}_{2 d}^{q} (\hbar), \star)$ and $\mathcal{E}_{2 d}^{q} (\hbar)$ are canonically isomorphic as \emph{filtered} algebras 
(by construction), and the classical shadow $\mathcal{A}_{2 d}^{q} (\hbar)$ can therefore be identified with the 
associated graded of $\mathcal{F}^{n} (\mathcal{A}_{2 d}^{q} (\hbar), \star)$. 
Since taking a $q$-commutator increases the filtration degree by one, if we look at it in the classical shadow, 
this yields a graded bilinear map $\langle -, - \rangle$ (a bracket on $\mathcal{A}_{2 d}^{q} (\hbar)$) of degree $+1$. 

\begin{proposition} 
The $q$-commutator in $\mathcal{F}^{\bullet} (\mathcal{A}_{2 d}^{q} (\hbar), \star)$ induces a graded bilinear map of degree $+1$ on 
the classical shadow $\mathcal{A}_{2 d}^{q} (\hbar)$, 
\begin{equation*} 
\langle -, - \rangle: \mathcal{A}_{2 d}^{q} (\hbar) \otimes \mathcal{A}_{2 d}^{q} (\hbar) \to \mathcal{A}_{2 d}^{q} (\hbar), 
\end{equation*} 
which satisfies the $q$-Poisson bracket axioms. 
\end{proposition}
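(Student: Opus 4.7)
The plan is to exploit the canonical filtered-algebra isomorphism $\varphi_q : (\mathcal{A}_{2d}^{q}(\hbar), \star) \to \mathcal{E}_{2d}^{q}(\hbar)$ established in the preceding proposition and to identify $\langle -,- \rangle$ with the associated-graded symbol of the $q$-commutator $[\wh{u},\wh{v}]_q := \wh{u}\wh{v} - q(u,v)\wh{v}\wh{u}$. Throughout I work with elements homogeneous with respect to the refined grading of $\mathcal{A}_{2d}^{q}(\hbar)$ by the monoid $\mathbb{Z}_{\geq 0}^{2d}$ (counting leaves of each colour), so that the coefficient $q(u,v)$ from \eqref{eq:q_Gamma_tilde_sigma} extends to a bimultiplicative pairing satisfying $q(v,u) = q(u,v)^{-1}$; the bracket is then extended by $K_{2d}^{(q)}$-bilinearity. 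The key algebraic input is the identity $q_{\Gamma, \Gamma'\vee\Gamma''} = q_{\Gamma,\Gamma'}\,q_{\Gamma,\Gamma''}$, immediate from the product formula defining $q_{\Gamma,\Gamma'}$.

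First I would verify the degree $+1$ property: for multi-homogeneous $u,v$ of filtration degrees $|u|, |v|$, the element $[\varphi_q(u),\varphi_q(v)]_q$ lies in $\mathcal{F}^{|u|+|v|+1}\mathcal{E}_{2d}^{q}(\hbar)$. On generators this is the defining relation $[\wh{\hbar}_\Gamma,\wh{\hbar}_{\Gamma'}]_q = \wh{\hbar}_{\Gamma\vee\Gamma'}$, since $|\Gamma\vee\Gamma'|-1 = (|\Gamma|-1)+(|\Gamma'|-1)+1$. For products I induct using the formal $q$-Leibniz identity $[\wh{a},\wh{b}\wh{c}]_q = [\wh{a},\wh{b}]_q\wh{c} + q(a,b)\wh{b}[\wh{a},\wh{c}]_q$, valid in any associative algebra carrying a bimultiplicative $q$-coefficient. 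Reading off the class of $[\varphi_q(u),\varphi_q(v)]_q$ in $\mathcal{F}^{|u|+|v|+1}/\mathcal{F}^{|u|+|v|+2}$ and transporting it back to $\mathcal{A}_{2d}^{q}(\hbar)$ via the graded identification produces the bracket $\langle u,v\rangle$; this agrees with \eqref{eq:normal_bracket} by construction of $\varphi_q = Q$.

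The three $q$-Poisson axioms I intend to verify are $q$-antisymmetry $\langle u,v\rangle + q(u,v)\langle v,u\rangle = 0$; the $q$-Leibniz rule $\langle u,vw\rangle = \langle u,v\rangle\,w + q(u,v)\,v\,\langle u,w\rangle$; and the $q$-Jacobi identity $q(w,u)\langle u,\langle v,w\rangle\rangle + q(u,v)\langle v,\langle w,u\rangle\rangle + q(v,w)\langle w,\langle u,v\rangle\rangle = 0$. Antisymmetry is immediate from $[\wh{v},\wh{u}]_q = -q(v,u)[\wh{u},\wh{v}]_q$ via $q(v,u) = q(u,v)^{-1}$. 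The $q$-Leibniz rule is precisely the identity of the previous paragraph, now passed to the associated graded modulo $\mathcal{F}^{|u|+|v|+|w|+2}$. The $q$-Jacobi identity is the graded image of the standard double-$q$-commutator identity in $\mathcal{E}_{2d}^{q}(\hbar)$, obtained by expanding the three cyclic terms and invoking the bimultiplicativity of $q$ to cancel the six monomial contributions pairwise; this is the same computation that underlies the coloured Lie axioms (cf.\ \cite{Hartwig_Larsson_Silvestrov}).

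The principal obstacle is the bookkeeping in the $q$-Jacobi identity. The cancellation in the cyclic Jacobi sum rests on the identity $q(u,[v,w]_q) = q(u,v)\,q(u,w)$, which at the level of leaf labellings is $q_{\Gamma,\Gamma'\vee\Gamma''} = q_{\Gamma,\Gamma'}\,q_{\Gamma,\Gamma''}$; matching these factors against the cyclically rotated ones demands the care typical of the braided/coloured setting. One further has to confirm that taking the $q$-commutator is well-defined on the multi-graded components and that the remainder terms living in $\mathcal{F}^{\bullet+2}$ do not obstruct descent to $\mathcal{A}_{2d}^{q}(\hbar)$. Once the bimultiplicativity relations for $q$ are systematically exploited, the argument reduces to a purely formal manipulation parallel to the colour Lie algebra case.
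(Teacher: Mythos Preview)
Your approach is correct and coincides with the paper's in all essential respects: both define $\langle-,-\rangle$ as the associated-graded symbol of the $q$-commutator and verify the axioms from the formal properties of $[-,-]_q$ in the filtered algebra. The only difference is packaging: the paper first writes out $w\star w'$ to order $|w|+|w'|+1$ via the explicit star-product formula \eqref{eq:normal_star_product} and then declares the three axioms ``a straightforward computation'', whereas you bypass the explicit formula and work directly with the bimultiplicative $q$-Leibniz and coloured-Jacobi identities in $\mathcal{E}_{2d}^{q}(\hbar)$ before passing to $\mathrm{gr}$. Your route is cleaner and makes transparent exactly which property of $q$ is used at each step (namely $q_{\Gamma,\Gamma'\vee\Gamma''}=q_{\Gamma,\Gamma'}q_{\Gamma,\Gamma''}$); the paper's route has the advantage of also displaying the first-order term of $w\star w'$ explicitly, which is used again later. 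Note a harmless convention clash: your $q(u,v)$ in $[\wh u,\wh v]_q=\wh u\wh v-q(u,v)\wh v\wh u$ is the paper's $q(v,u)$, so your antisymmetry and Jacobi formulas differ from the paper's only by this relabelling, and the paper states Jacobi in Loday form rather than your cyclic form.
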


\begin{proof}
Take a pair of basis elements $w = w_{\Gamma_1, \Gamma_2, \dots, \Gamma_m} \in \mathbf{B} (\mathcal{A}_{2 d}^{q} (\hbar))$ and 
$w' = w_{\Gamma_1', \Gamma_2', \dots, \Gamma_n'} \in \mathbf{B} (\mathcal{A}_{2 d}^{q} (\hbar))$. 
Their degrees are as follows: 
$|w| \equiv \mathrm{deg} (w) = \sum_{i = 1}^{m} |\Gamma_i| - m$, and 
$|w'| \equiv \mathrm{deg} (w') = \sum_{j = 1}^{n} |\Gamma_j'| - n$. 
Recall, that this notation implies 
$\Gamma_1 \succcurlyeq \Gamma_2 \succcurlyeq \dots \succcurlyeq \Gamma_m$, and 
$\Gamma_1' \succcurlyeq \Gamma_2' \succcurlyeq \dots \succcurlyeq \Gamma_n'$. 
From the explicit formula \eqref{eq:normal_star_product} for the star product, we have: 
\begin{multline*} 
w \star w' = 
w w' + 
\sum_{i = 1}^{m} \sum_{j = 1}^{n} 
\big( q \big( \wt{\Gamma} \vee \wt{\Gamma}', \sigma_{m, n}^{(i, j)} \big) \big)^{-1} 
\hbar_{\Gamma_i \vee \Gamma_j'}^{+} 
(\hbar_{\Gamma_m} \dots \hbar_{\Gamma_{i + 1}} 
\times \\ \times 
\hbar_{\Gamma_{i - 1}} \dots \hbar_{\Gamma_1}) \,
(\hbar_{\Gamma_n'} \dots \hbar_{\Gamma_{j + 1}'} \hbar_{\Gamma_{j - 1}'} \dots \hbar_{\Gamma_1'}) + 
O_{|w| + |w'| + 2}, 
\end{multline*} 
where $O_{|w| + |w'| + 2}$ stands for an element of filtration degree $|w| + |w'| + 2$, 
$\wt{\Gamma} \vee \wt{\Gamma'}$ denotes the concatenation of the lists 
$\wt{\Gamma} = (\Gamma_m, \dots, \Gamma_2, \Gamma_1)$ and 
$\wt{\Gamma}' = (\Gamma_n', \dots, \Gamma_2', \Gamma_1')$, 
and $\sigma_{m, n}^{(i, j)} \in S_{m + n}$ denotes a permutation 
\begin{multline*} 
(\sigma_{m, n}^{(i, j)} (1), \sigma_{m, n}^{(i, j)} (2), \dots, \sigma_{m, n}^{(i, j)} (m + n)) := 
(i, j, \, 1, \dots, i - 1, i + 1, \dots, m, \\ 
m + 1, \dots, m + j - 1, m + j + 1, \dots, m + n). 
\end{multline*}
The opposite star product $w' \star w$ looks totally similar. 
Taking the $q$-commutator $[w, w']_q^{\star} := w \star w' - q (w', w) w' \star w$, where 
$q (w', w)$ is determined from $w w' = q (w', w) w' w$, 
one extracts $\langle w, w' \rangle$ as the element of degree $|w| + |w'| + 1$ satisfying 
$[w, w']_{q}^{\star} = \langle w, w' \rangle + O_{|w| + |w'| + 2}$. 
It is straightforward to check, that 
\begin{equation*} 
\langle w, w' \rangle = - q (w', w) \langle w', w \rangle, 
\end{equation*}
i.e. the first axiom of a $q$-Poisson bracket (the $q$-antisymmetry) is satisfied. 
If we take another arbitrary element $w'' = w_{\Gamma_1'', \Gamma_2'', \dots, \Gamma_k''} 
\in \mathbf{B} (\mathcal{A}_{2 d}^{q} (\hbar))$, then one can establish 
the other two axioms, i.e. the $q$-Leibniz rule 
\begin{equation*}
\langle w, w' w'' \rangle = 
\langle w, w' \rangle w'' + 
q (w, w') w' \langle w, w'' \rangle,  
\end{equation*}
and the $q$-Jacobi identity 
\begin{equation*} 
\langle w, \langle w', w'' \rangle \rangle = 
\langle \langle w, w' \rangle, w'' \rangle + 
q (w, w')
\langle w', \langle w, w'' \rangle \rangle, 
\end{equation*} 
by a straightforward computation. 
Observe, that with these properties, the bracket $\langle -, - \rangle$ 
is determined by its values on the generators $\hbar_{\Gamma}$, which are just 
$\langle \hbar_{\Gamma}, \hbar_{\Gamma'} \rangle = \hbar_{\Gamma \vee \Gamma'}$, 
for $\Gamma, \Gamma' \in \mathcal{Y}^{+} ([2 d])$. 
\qed 
\end{proof}

\section{Distortion of the Weyl quantization}

We are now interested in a $q$-analogue of the Weyl quantization. 
In quantum mechanics of a $d$-dimensional system, this is basically a symmetrization map 
$W : \mathcal{A} \to \mathcal{B}$, where $\mathcal{A} = \mathbb{C} [z_1, z_2, \dots, z_{2d}] \otimes \mathbb{C} [h]$, 
and $\mathcal{B} = \mathbb{C} \langle \xi_1, \xi_2, \dots, \xi_{2 d} \rangle [\eta] / \mathcal{I}$, 
where $\mathcal{I}$ is the ideal generated by the canonical commutation relations 
$[\xi_i, \xi_{d + j}]= - \ii \eta$, $[\xi_i, \xi_j] = 0$, and 
$[\xi_{d + i}, \xi_{d + j}] = 0$, where $[-, -]$ denotes a commutator, and $i, j \in [d]$. 
The canonical images of $\xi_1, \xi_2, \dots, \xi_d$ in $\mathcal{B}$ are denoted as 
$\wh{z}_1 = \wh{p}_1, \wh{z}_2 = \wh{p}_2, \dots, \wh{z}_d = \wh{p}_{d}$ 
(the canonical momenta), the canonical images of $\xi_{d + 1}, \xi_{d + 2}, \dots, \xi_{2 d}$ are denoted as 
$\wh{z}_{d + 1} = \wh{x}_{1}, 
\wh{z}_{d + 2} = \wh{x}_{2}, \dots, 
\wh{z}_{2 d} = \wh{x}_{d}$ (the coordinates), and the canonical image of $\eta$ is denoted $\hbar$ 
(the Planck constant). 
The \emph{Weyl quantization map} is a linear map $W: \mathcal{A} \to \mathcal{B}$ defined 
on the monomials in generators as follows: 
\begin{equation*} 
W: 
h^{m} z_{i_1} z_{i_2} \dots z_{i_n} \mapsto \frac{\hbar^{m}}{n!} 
\sum_{\sigma \in S_n} \wh{z}_{i_{\sigma (1)}} \wh{z}_{i_{\sigma (2)}} \dots \wh{z}_{i_{\sigma (n)}}, 
\end{equation*}
where $m \in \mathbb{Z}_{\geqslant 0}$, $i_{\alpha} \in [2 d]$, $\alpha \in [n]$, $n \in \mathbb{Z}_{> 0}$. 
A characteristic property of the Weyl quantization which selects $W$ from the other possible quantizations 
(like, for example, the normal quantization) is the \emph{affine equivariance}. 
If we take an arbitrary $2 d \times 2 d$ matrix $A$ over $\mathbb{C}$, then we can act with it on the 
column of the generators $(z_1, z_2, \dots, z_{2 d})^{T}$, or on the column 
of their quantized analogues $(\wh{z}_1, \wh{z}_2, \dots, \wh{z}_{2 d})^{T}$, where $(-)^{T}$ denotes transposition. 
One can first act, and then quantize, or first quantize, and then act. 
The affine equivariance is the property that the two results coincide for any $A$: 
\begin{equation} 
\label{eq:affine_Weyl}
W ( (A z)_{i_1} (A z)_{i_2} \dots (A z)_{i_n} ) = 
\frac{1}{n!} \sum_{\sigma \in S_{n}} 
(A \wh{z})_{i_{\sigma (1)}} (A \wh{z})_{i_{\sigma (2)}} \dots (A \wh{z})_{i_{\sigma (n)}},  
\end{equation}  
where $(A z)_{i} := \sum_{j = 1}^{2 d} A_{i, j} z_j$, and $(A \wh{z})_{i} := \sum_{j = 1}^{2 d} A_{i, j} \wh{z}_{j}$, $i \in [2 d]$. 

We would like to define a $q$-Weyl quantization map as a linear map $W^{(q)}: \mathcal{A}_{2 d}^{q} (\hbar) \to \mathcal{E}_{2 d}^{q} (\hbar)$, 
where $\mathcal{E}_{2 d}^{q} (\hbar)$ is the epoch\'e algebra discussed in the previous sections, and 
$\mathcal{A}_{2 d}^{q} (\hbar)$ is its classical shadow. Suppose we wish to define it by the formula 
\begin{equation*} 
W^{(q)}: \hbar_{\Gamma_1} \hbar_{\Gamma_2} \dots \hbar_{\Gamma_{n}} \mapsto 
\sum_{\sigma \in S_n} C_{\Gamma_1, \Gamma_2, \dots, \Gamma_n}^{(q)} (\sigma)
\wh{\hbar}_{\Gamma_{\sigma (1)}} \wh{\hbar}_{\Gamma_{\sigma (2)}} \dots \wh{\hbar}_{\Gamma_{\sigma (n)}}, 
\end{equation*}
where $C_{\Gamma_1, \Gamma_2, \dots, \Gamma_n}^{(q)} (\sigma)$ are some coefficients, and 
$\Gamma_i \in \mathcal{Y}^{+} ([2 d])$, $i \in [n]$, $n \in \mathbb{Z}_{> 0}$, 
$\Gamma_1  \succcurlyeq \Gamma_2 \succcurlyeq \dots \succcurlyeq \Gamma_n$. 
Then a natural condition that 
\begin{equation*} 
W^{(q)} (\hbar_{\Gamma_1} \hbar_{\Gamma_2} \dots \hbar_{\Gamma_{n}}) - 
\wh{\hbar}_{\Gamma_1} \wh{\hbar}_{\Gamma_2} \dots \wh{\hbar}_{\Gamma_{n}} \in \mathcal{F}^{p + 1} \mathcal{E}_{2 d}^{q} (\hbar), 
\end{equation*}
where $p = |\Gamma_1| + |\Gamma_2| + \dots + |\Gamma_{n}| - n$, and 
$\mathcal{F}^{\bullet} \mathcal{E}_{2 d}^{q} (\hbar)$ is the filtration on $\mathcal{E}_{2 d}^{q} (\hbar)$ described 
in the previous sections, yields a condition on the coefficients 
\begin{equation*} 
\sum_{\sigma \in S_n} C_{\Gamma_1, \Gamma_2, \dots, \Gamma_n}^{(q)} (\sigma) q (\wt{\Gamma}, \sigma) = 1, 
\end{equation*}
where $\wt{\Gamma} = (\Gamma_1, \Gamma_2, \dots, \Gamma_n)$, and one uses 
the notation \eqref{eq:q_Gamma_tilde_sigma}. 
The problem remains: what is the correct way to define the 
coefficients $C_{\Gamma_1, \Gamma_2, \dots, \Gamma_n}^{(q)}$ satisfying this condition? 

It is natural to generalize the affine equivariance in the quantum 
mechanical case \eqref{eq:affine_Weyl} 
into the \emph{left and right affine coequivariance}. 
It is convenient to consider the truncations $(\mathcal{E}_{2 d}^{q} (\hbar))_{\leqslant N}$ and 
$(\mathcal{A}_{2 d}^{q} (\hbar))_{\leqslant N}$, for any $N \in \mathbb{Z}_{> 0}$, since the corresponding sums 
are going to be in this case finite. 
Formally, these truncations can be perceived as setting all the generators 
$\hbar_{\Gamma}$ and $\wh{\hbar}_{\Gamma}$ to zero, if $|\Gamma| > N$, $\Gamma \in \mathcal{Y}^{+} ([2 d])$. 
Let 
\begin{equation*}
\mathcal{Y}_{\leqslant N}^{+} ([2 d]) := \lbrace \Gamma \in \mathcal{Y}^{+} ([2 d]) \,|\, |\Gamma| \leqslant N \rbrace, 
\end{equation*} 
and denote $\hbar_{\Gamma}^{(N)}$ the canonical image of $\hbar_{\Gamma}$, $|\Gamma| \leqslant N$, in the truncation 
$(\mathcal{A}_{2 d}^{q} (\hbar))_{\leqslant N}$ of the classical shadow, $\Gamma \in \mathcal{Y}_{\leqslant N}^{+} ([2 d])$, 
and denote $\wh{\hbar}_{\Gamma}^{(N)}$ the canonical image of $\wh{\hbar}_{\Gamma}$, $|\Gamma| \leqslant N$, 
in the truncation $(\mathcal{E}_{2 d}^{q} (\hbar))_{\leqslant N}$. 

Let $\mathcal{M}_{2 d}^{q}$ be an algebra of the shape 
\begin{equation}
\label{eq:algebra_M}
\mathcal{M}_{2 d}^{q} = \mathbb{C} \big\langle \lbrace \Lambda_{\Gamma, \Gamma'}
\rbrace_{\Gamma, \Gamma' \in \mathcal{Y}^{+} ([2 d])} 
\big\rangle/ \mathcal{I}, 
\end{equation}
where $\Lambda_{\Gamma, \Gamma'}$ denote the generators written in a matrix form, and 
$\mathcal{I}$ is an ideal generated by a countable collection of noncommutative polynomials. 
Denote 
$A_{\Gamma, \Gamma'}$
the canonical image of $\Lambda_{\Gamma, \Gamma'}$ in $\mathcal{M}_{2 d}^{q}$. 
The \emph{left affine coequivariance condition} is the following equality in 
$\mathcal{M}_{2 d}^{q} \otimes (\mathcal{E}_{2 d}^{q} (\hbar))_{\leqslant N}$: 
\begin{multline}
\label{eq:left_affine_coequivariance} 
\sum_{\substack{\Gamma_1', 
\dots, \Gamma_n' \in \mathcal{Y}_{\leqslant N}^{+} ([2 d]),\\ \sigma \in S_{n}}}
A_{\Gamma_1, \Gamma_1'} 
\dots 
A_{\Gamma_n, \Gamma_n'} \otimes 
C_{\Gamma_1', 
\dots, \Gamma_n'}^{(q, N)} (\sigma) 
\wh{\hbar}_{\Gamma_{\sigma (1)}'}^{(N)}
\dots 
\wh{\hbar}_{\Gamma_{\sigma (n)}'}^{(N)}
= \\ = 
\sum_{\substack{\sigma \in S_{n}, \\ 
\Gamma_1', \dots, \Gamma_n' \in \mathcal{Y}_{\leqslant N}^{+} ([2 d])}}
C_{\Gamma_1, \dots, \Gamma_n}^{(q, N)} (\sigma) 
A_{\Gamma_{\sigma (1)}, \Gamma_1'} 
\dots 
A_{\Gamma_{\sigma (n)}, \Gamma_n'} \otimes 
\wh{\hbar}_{\Gamma_{1}'}^{(N)}
\dots 
\wh{\hbar}_{\Gamma_{n}'}^{(N)}, 
\end{multline}
where $\Gamma_1, \Gamma_2, \dots, \Gamma_n \in \mathcal{Y}_{\leqslant N}^{+} ([2 d])$, 
$n \in \mathbb{Z}_{> 0}$. 
This is a condition on a collection of coefficients $C_{\Gamma_1, \dots, \Gamma_n}^{(q, N)} (\sigma)$.  
Similarly, the \emph{right affine coequivariance condition} is an equality in 
$(\mathcal{E}_{2 d}^{q} (\hbar))_{\leqslant N} \otimes \mathcal{M}_{2 d}^{q}$: 
\begin{multline} 
\label{eq:right_affine_coequivariance} 
\sum_{\substack{\Gamma_1', 
\dots, \Gamma_n' \in \mathcal{Y}_{\leqslant N}^{+} ([2 d]),\\ \sigma \in S_{n}}}
C_{\Gamma_1', 
\dots, \Gamma_n'}^{(q, N)} (\sigma) 
\wh{\hbar}_{\Gamma_{\sigma (1)}'}^{(N)}
\dots 
\wh{\hbar}_{\Gamma_{\sigma (n)}'}^{(N)}
\otimes 
A_{\Gamma_1', \Gamma_1} 
\dots 
A_{\Gamma_n', \Gamma_n}
= \\ = 
\sum_{\substack{\sigma \in S_{n}, \\ 
\Gamma_1', \dots, \Gamma_n' \in \mathcal{Y}_{\leqslant N}^{+} ([2 d])}}
C_{\Gamma_1, \dots, \Gamma_n}^{(q, N)} (\sigma) 
\wh{\hbar}_{\Gamma_{1}'}^{(N)}
\dots 
\wh{\hbar}_{\Gamma_{n}'}^{(N)}  
\otimes 
A_{\Gamma_1', \Gamma_{\sigma (1)}} 
\dots 
A_{, \Gamma_n', \Gamma_{\sigma (n)}},  
\end{multline}
where $\Gamma_1, \Gamma_2, \dots, \Gamma_n \in \mathcal{Y}_{\leqslant N}^{+} ([2 d])$, 
$n \in \mathbb{Z}_{> 0}$. 
This is another condition on the coefficients $C_{\Gamma_1, \dots, \Gamma_n}^{(q, N)} (\sigma)$.  
These coefficients are supposed to define a linear map 
$W_{N}^{(q)}: (\mathcal{A}_{2 d}^{q} (\hbar))_{\leqslant N} \to (\mathcal{E}_{2 d}^{q} (\hbar))_{\leqslant N}$, 
\begin{equation} 
\label{eq:qWeyl_CqN}
W_{N}^{(q)}: \hbar_{\Gamma_1}^{(N)} \hbar_{\Gamma_2}^{(N)} \dots \hbar_{\Gamma_{n}}^{(N)} \mapsto 
\sum_{\sigma \in S_n} C_{\Gamma_1, \Gamma_2, \dots, \Gamma_n}^{(q)} (\sigma)
\wh{\hbar}_{\Gamma_{\sigma (1)}}^{(N)} 
\wh{\hbar}_{\Gamma_{\sigma (2)}}^{(N)} \dots \wh{\hbar}_{\Gamma_{\sigma (n)}}^{(N)}, 
\end{equation}
and to satisfy 
\begin{equation} 
\label{eq:classical_limit}
\sum_{\sigma \in S_n} C_{\Gamma_1, \Gamma_2, \dots, \Gamma_n}^{(q, N)} (\sigma) q (\wt{\Gamma}, \sigma) = 1, 
\end{equation}
where $\wt{\Gamma} = (\Gamma_1, \Gamma_2, \dots, \Gamma_n) \in (\mathcal{Y}_{\leqslant N}^{+} ([2 d]))^{n}$. 
Let us term the latter condition \emph{the existence of a classical limit} for the quantization map. 
Observe, that the notation $q (\wt{\Gamma}, \sigma)$ defined 
in \eqref{eq:q_Gamma_tilde_sigma} can explicitly be described as 
\begin{equation*} 
q (\wt{\Gamma}, \sigma) := \prod_{\substack{
1 \leqslant i < j \leqslant n, \\
\sigma^{-1} (i) > \sigma^{-1} (j)
}} 
q_{\Gamma_j, \Gamma_i}, 
\end{equation*}
where $q_{\Gamma, \Gamma'}$ corresponds to $\hbar_{\Gamma} \hbar_{\Gamma'} = q_{\Gamma', \Gamma} \hbar_{\Gamma'} \hbar_{\Gamma}$, 
for $\Gamma, \Gamma' \in \mathcal{Y} ([2 d])$.

\begin{theorem} 
The conditions of left and right affine 
coequivariance \eqref{eq:left_affine_coequivariance}, 
\eqref{eq:right_affine_coequivariance}, 
in addition to the condition of existence of a classical limit 
\eqref{eq:classical_limit}, determine the ideal $\mathcal{I}$ 
in \eqref{eq:algebra_M} and the coefficients $C_{\Gamma_1, \Gamma_2, \dots, \Gamma_N}^{(q, N)} (\sigma)$ for the 
quantization map \eqref{eq:qWeyl_CqN}, 
\begin{equation*} 
C_{\Gamma_1, \Gamma_2, \dots, \Gamma_n}^{(q, N)} (\sigma) = 
\frac{q (\wt{\Gamma}, \sigma)}{\sum_{\varkappa \in S_{n}} (q (\wt{\Gamma}, \varkappa))^{2}}, 
\end{equation*}
where $\wt{\Gamma} = (\Gamma_1, \Gamma_2, \dots, \Gamma_n) \in 
(\mathcal{Y}_{\leqslant N}^{+} ([2 d]))^{n}$, 
$\sigma \in S_n$, $N \in \mathbb{Z}_{> 0}$.  
The ideal $\mathcal{I}$ is generated by the quantum matrices type relations: 
\begin{multline*} 
\Lambda_{\Gamma_2, \Gamma_1'}
\Lambda_{\Gamma_1, \Gamma_2'} + 
q_{\Gamma_1', \Gamma_2'} 
\Lambda_{\Gamma_2, \Gamma_2'} 
\Lambda_{\Gamma_1, \Gamma_1'} 
= \\ =  
q_{\Gamma_1, \Gamma_2} \big\lbrace 
\Lambda_{\Gamma_1, \Gamma_1'} 
\Lambda_{\Gamma_2, \Gamma_2'} 
+ q_{\Gamma_1', \Gamma_2'} 
\Lambda_{\Gamma_1, \Gamma_2'}
\Lambda_{\Gamma_2, \Gamma_1'}
\big\rbrace, 
\end{multline*}
and 
\begin{multline*} 
\Lambda_{\Gamma_1, \Gamma_2'}
\Lambda_{\Gamma_2, \Gamma_1'}
+ q_{\Gamma_1, \Gamma_2} 
\Lambda_{\Gamma_2, \Gamma_2'}
\Lambda_{\Gamma_1, \Gamma_1'}
= \\ = 
q_{\Gamma_1', \Gamma_2'} 
\lbrace 
\Lambda_{\Gamma_1, \Gamma_1'}
\Lambda_{\Gamma_2, \Gamma_2'}
+ q_{\Gamma_1, \Gamma_2} 
\Lambda_{\Gamma_2, \Gamma_1'}
\Lambda_{\Gamma_1, \Gamma_2'}
\rbrace, 
\end{multline*}
where $\Gamma_1$, $\Gamma_2$, $\Gamma_1'$, and $\Gamma_2'$ vary over 
$\mathcal{Y}^{+} ([2 d])$. 
\end{theorem}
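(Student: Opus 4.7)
The plan is to prove the theorem in three stages: first dispose of the trivial $n=1$ case, then perform a detailed matching at $n=2$ which simultaneously forces the formula for the coefficients and the quantum-matrix-type relations, and finally extend to general $n$ by an induction that reduces to the $n=2$ case.

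For $n=1$ the symmetric group $S_1$ is trivial, so \eqref{eq:classical_limit} reduces to $C^{(q,N)}_\Gamma(\mathit{id})=1$, in agreement with the claim since $\sum_{\varkappa \in S_1} q(\wt{\Gamma},\varkappa)^2 = 1$. For $n=2$, I would substitute the ansatz \eqref{eq:qWeyl_CqN} into both sides of \eqref{eq:left_affine_coequivariance} and use the defining relation $\wh{\hbar}_{\Gamma_2'} \wh{\hbar}_{\Gamma_1'} = q_{\Gamma_1', \Gamma_2'} \wh{\hbar}_{\Gamma_1'} \wh{\hbar}_{\Gamma_2'} - q_{\Gamma_1', \Gamma_2'} \wh{\hbar}_{\Gamma_1' \vee \Gamma_2'}$ to expand every tensor factor in the PBW basis $\mathbf{B}((\mathcal{E}_{2d}^{q}(\hbar))_{\leqslant N})$ furnished by Proposition~1. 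Matching coefficients of each basis vector then produces two families of identities on the $\Lambda_{\Gamma_i, \Gamma_j'}$: one from the ``top filtration'' contributions $\wh{\hbar}_{\Gamma_1'} \wh{\hbar}_{\Gamma_2'}$, and one from the ``lower filtration'' corrections $\wh{\hbar}_{\Gamma_1' \vee \Gamma_2'}$. Combined with the classical-limit condition $C^{(q,N)}_{\Gamma_1',\Gamma_2'}(\mathit{id}) + q_{\Gamma_1',\Gamma_2'} C^{(q,N)}_{\Gamma_1',\Gamma_2'}((12)) = 1$, the resulting linear system uniquely yields the stated coefficients at $n=2$, while the identities on $\Lambda$ assemble, after symmetrization in $\Gamma_1' \leftrightarrow \Gamma_2'$ and use of the convention \eqref{eq:hbar_convention}, into the first of the two quantum-matrix-type relations. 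An entirely analogous analysis of \eqref{eq:right_affine_coequivariance} yields the second relation.

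For general $n$, I would argue by induction on the length of $\sigma \in S_n$. Using the $n=2$ relations on $\Lambda$, any product $A_{\Gamma_{\sigma(1)}, \Gamma_1'} \ldots A_{\Gamma_{\sigma(n)}, \Gamma_n'}$ can be rewritten as a sum of products with the canonical ordering on the first index, each weighted by a monomial in the entries of $q$. The cocycle identity \eqref{eq:q_Gamma_tilde_sigma_tau} ensures these weights accumulate, independently of the chosen reduced word, into exactly the factor $q(\wt{\Gamma}, \sigma)$. Matching the resulting sums at the top filtration level in \eqref{eq:left_affine_coequivariance} then forces $C^{(q,N)}(\sigma)/C^{(q,N)}(\mathit{id}) = q(\wt{\Gamma}, \sigma)$, and the normalization \eqref{eq:classical_limit} fixes the common denominator to be $\sum_\varkappa q(\wt{\Gamma}, \varkappa)^2$. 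Existence (that the stated $\mathcal{I}$ and $C^{(q,N)}$ do satisfy both coequivariance conditions for every $n$) is verified directly by the same calculation in reverse, so both uniqueness and existence follow.

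The main obstacle will be the $n=2$ analysis: verifying that the lower-filtration correction terms $\wh{\hbar}_{\Gamma_1' \vee \Gamma_2'}$ produce \emph{precisely} the two quadratic relations stated and no more. The key is a symmetric/antisymmetric decomposition in the index $(\Gamma_1', \Gamma_2')$: the symmetric part of the correction collapses onto the top-filtration matching and contributes no new constraint, whereas the antisymmetric part (using \eqref{eq:hbar_convention}) splits into the two distinct quadratic relations of the theorem. A subordinate subtlety in the induction is confluence: one must confirm that the final reordering of $A_{\Gamma_{\sigma(1)}, \Gamma_1'} \ldots A_{\Gamma_{\sigma(n)}, \Gamma_n'}$ is independent of the reduced word chosen for $\sigma$, which reduces to a standard Yang-Baxter-type coherence check on the generators using the braid relation $\sigma_i \sigma_{i+1} \sigma_i = \sigma_{i+1} \sigma_i \sigma_{i+1}$ in $S_n$.
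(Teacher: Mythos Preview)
Your overall architecture (settle $n=2$, then induct) matches the paper's, but the central inductive step is misformulated in a way that would not go through as written.

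You claim that, using the $n=2$ relations on $\Lambda$, ``any product $A_{\Gamma_{\sigma(1)},\Gamma_1'}\cdots A_{\Gamma_{\sigma(n)},\Gamma_n'}$ can be rewritten as a sum of products with the canonical ordering on the first index, each weighted by a monomial in the entries of $q$,'' and that these weights accumulate into the single factor $q(\wt{\Gamma},\sigma)$. But the quantum-matrix relations are \emph{not} reordering rules for individual monomials: each relation equates a \emph{pair} of products on one side with a \emph{pair} on the other. There is no way to straighten a single $A_{\Gamma_{\sigma(1)},\Gamma_1'}\cdots A_{\Gamma_{\sigma(n)},\Gamma_n'}$ into the identity-ordered form with a scalar prefactor; what the relations actually give is a transformation law for the \emph{symmetrized} sums
\[
L_n(\wt{\Gamma},\wt{\Gamma}') \;=\; \sum_{\varkappa\in S_n} q(\wt{\Gamma}',\varkappa)\, A_{\Gamma_1,\Gamma'_{\varkappa(1)}}\cdots A_{\Gamma_n,\Gamma'_{\varkappa(n)}},
\]
namely $L_n(\wt{\Gamma}_\sigma,\wt{\Gamma}') = q(\wt{\Gamma},\sigma)\,L_n(\wt{\Gamma},\wt{\Gamma}')$, and the analogous identity for $R_n$. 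This is exactly what the paper proves by induction on $n$ (isolating the last, respectively first, factor). Your ``reordering plus cocycle'' argument, and the Yang--Baxter confluence check you anticipate, are addressing a statement that is simply false in $\mathcal{M}_{2d}^q$; you need to work with $L_n$ and $R_n$ from the start.

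A second divergence is how the coefficient formula is pinned down. You hope that top-filtration matching directly forces $C(\sigma)/C(\mathit{id}) = q(\wt{\Gamma},\sigma)$, but once the $A$'s only satisfy sum relations this matching is not termwise and no such ratio drops out. The paper instead specialises to $\Gamma_1=\cdots=\Gamma_n=\Gamma_0$, so that $q(\wt{\Gamma},\cdot)\equiv 1$ on the $\wt{\Gamma}$ side and the \emph{right}-coequivariance relations collapse the $A$-products; combined with the consistency condition $C_{\wt{\Gamma}_\tau}(\tau^{-1}\sigma)=q(\wt{\Gamma},\tau)\,C_{\wt{\Gamma}}(\sigma)$ (which you do not derive, but which follows immediately from well-definedness of $W_N^{(q)}$ on the $q$-commuting generators), this isolates the formula for $C$. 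Finally, your emphasis on the lower-filtration terms $\wh{\hbar}_{\Gamma_1'\vee\Gamma_2'}$ at $n=2$ is a red herring: the paper extracts both the relations and the coefficients entirely at the associated-graded (top filtration) level, and the lower corrections impose nothing new.
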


\noindent\emph{Proof.} 
Let us first derive another generic fact about the coefficients $C_{\Gamma_1, \Gamma_2, \dots, \Gamma_n}^{(q, N)} (\sigma)$. 
Fix a positive integer $N$. 
Take a pair of permutations $\sigma, \tau \in S_{n}$ and look at 
the coefficient corresponding to $\sigma \circ \tau \in S_n$. 
On one hand, since $\hbar_{\Gamma_{\tau (1)}}^{(N)} \dots \hbar_{\Gamma_{\tau (n)}}^{(N)} = 
q (\wt{\Gamma}, \tau) \hbar_{\Gamma_1}^{(N)} \dots \hbar_{\Gamma_n}^{(N)}$, 
where $\wt{\Gamma} = (\Gamma_1, \dots, \Gamma_n)$, we must have 
\begin{equation*} 
W_{N}^{(q)} (
\hbar_{\Gamma_{\tau (1)}}^{(N)} \dots \hbar_{\Gamma_{\tau (n)}}^{(N)}) 
= 
q (\wt{\Gamma}, \tau) 
\sum_{\sigma \in S_n} 
C_{\Gamma_1, \dots, \Gamma_n}^{(q, N)} (\sigma) 
\wh{\hbar}_{\Gamma_{\sigma (1)}}^{(N)} \dots 
\wh{\hbar}_{\Gamma_{\sigma (n)}}^{(N)} + 
\wh{Z}, 
\end{equation*} 
where $\wh{Z} \in \mathcal{F}^{m + 1} (\mathcal{E}_{2 d}^{q} (\hbar))_{\leqslant N}$, 
$m = |\Gamma_1| + \dots + |\Gamma_n| - n$. 
On the other hand, a straightforward application of the formula for $W_{N}^{(q)}$ yields 
\begin{equation*} 
W_{N}^{(q)} (
\hbar_{\Gamma_{\tau (1)}}^{(N)} \dots \hbar_{\Gamma_{\tau (n)}}^{(N)}) 
= 
\sum_{\sigma \in S_n} 
C_{\Gamma_{\tau (1)}, \dots, \Gamma_{\tau (n)}}^{(q, N)} (\sigma) 
\wh{\hbar}_{\Gamma_{(\tau \circ \sigma) (1)}}^{(N)} \dots 
\wh{\hbar}_{\Gamma_{(\tau \circ \sigma) (n)}}^{(N)}. 
\end{equation*}
Changing the summation index to $\sigma' = \tau \circ \sigma$ and comparing the two expressions, 
one obtains 
\begin{equation} 
\label{eq:C_tau_sigma}
C_{\Gamma_{\tau (1)}, \dots, \Gamma_{\tau (n)}}^{(q, N)} (\tau^{-1} \circ \sigma) = 
q (\wt{\Gamma}, \tau) C_{\Gamma_1, \dots, \Gamma_n}^{(q, N)} (\sigma). 
\end{equation}
Fix now $\wt{\Gamma} = (\Gamma_1, \Gamma_2, \dots, \Gamma_n) \in 
(\mathcal{Y}_{\leqslant N}^{+} ([2 d]))^{n}$,  
and look at the left affine coequivariance condition \eqref{eq:left_affine_coequivariance}. 
Denote the left hand side $\mathbf{L}_{N} (\wt{\Gamma})$ and the 
right hand side $\mathbf{R}_{N} (\wt{\Gamma})$. 
Changing the summation indices in the expression for $\mathbf{L}_{N} (\wt{\Gamma})$ to 
$\Gamma_i'' = \Gamma_{\sigma (i)}'$, $i \in [n]$, one obtains 
\begin{multline*} 
\mathbf{L}_{N} (\wt{\Gamma}) = 
\sum_{\Gamma_1'', \dots, \Gamma_n'' \in \mathcal{Y}_{\leqslant N}^{+} ([2 d])} 
\sum_{\sigma \in S_n}
A_{\Gamma_1, \Gamma_{\sigma^{-1} (1)}''}
\dots 
A_{\Gamma_n, \Gamma_{\sigma^{-1} (n)}''}
\otimes \\ \otimes  
C_{\Gamma_{\sigma^{-1} (1)}'', 
\dots, \Gamma_{\sigma^{-1} (n)}''}^{(q, N)} (\sigma) 
\wh{\hbar}_{\Gamma_{1}''}^{(N)}
\dots 
\wh{\hbar}_{\Gamma_{n}''}^{(N)}. 
\end{multline*} 
Observe now, that if one has an abstract expression $f (\wt{\Gamma})$ 
depending on 
$\wt{\Gamma} = (\Gamma_1, \dots, \Gamma_n) \in (\mathcal{Y}_{\leqslant N}^{+} ([2 d]))^{n}$
and wishes to take a sum over all tuples $\wt{\Gamma}$, 
then one can do it as follows: 
\begin{equation*} 
\sum_{\wt{\Gamma} \in (\mathcal{Y}_{\leqslant N}^{+} ([2 d]))^{n}} 
f (\wt{\Gamma}) = 
\sum_{
\substack{
\Gamma_1, \Gamma_2, \dots, \Gamma_n \in \mathcal{Y}_{\leqslant N}^{+} ([2 d]),\\
\Gamma_1 \succcurlyeq \Gamma_2 \succcurlyeq \dots \succcurlyeq \Gamma_n
}} 
\frac{1}{w_{\Gamma_1, \Gamma_2, \dots, \Gamma_n}} 
\sum_{\rho \in S_n}
f (\wt{\Gamma}_{\rho}),  
\end{equation*}
where $w_{\Gamma_1, \Gamma_2, \dots, \Gamma_n} := \# \lbrace \sigma \in S_n \,|\, \wt{\Gamma}_{\sigma} = \wt{\Gamma} \rbrace$. 
Hence, 
\begin{multline*} 
\mathbf{L}_{N} (\wt{\Gamma}) = 
\sum_{
\substack{
\Gamma_1'', \dots, \Gamma_n'' \in \mathcal{Y}_{\leqslant N}^{+} ([2 d]),\\
\Gamma_1'' \succcurlyeq \dots \succcurlyeq \Gamma_n'',\\
\sigma, \rho \in S_n
}} 
\frac{1}{w_{\Gamma_1'', \dots, \Gamma_n''}} 
A_{\Gamma_1, \Gamma_{(\rho \circ \sigma^{-1}) (1)}''}
\dots 
A_{\Gamma_n, \Gamma_{(\rho \circ \sigma^{-1}) (n)}''}
\otimes \\ \otimes  
C_{\Gamma_{(\rho \circ \sigma^{-1}) (1)}'', 
\dots, \Gamma_{(\rho \circ \sigma^{-1}) (n)}''}^{(q, N)} (\sigma) 
\wh{\hbar}_{\Gamma_{\rho (1)}''}^{(N)}
\dots 
\wh{\hbar}_{\Gamma_{\rho (n)}''}^{(N)}. 
\end{multline*}
The same trick applied to $\mathbf{R}_{N} (\wt{\Gamma})$ yields: 
\begin{multline} 
\label{eq:rhs_left_coequivariance} 
\mathbf{R}_{N} (\wt{\Gamma}) = 
\sum_{\substack{\sigma, \rho  \in S_{n}, \\ 
\Gamma_1', \dots, \Gamma_n' \in \mathcal{Y}_{\leqslant N}^{+} ([2 d]),\\ 
\Gamma_1' \succcurlyeq \dots \succcurlyeq \Gamma_n'
}}
\frac{1}{w_{\Gamma_1', \dots, \Gamma_n'}}
A_{\Gamma_{\sigma (1)}, \Gamma_{\rho(1)}'} 
\dots 
A_{\Gamma_{\sigma (n)}, \Gamma_{\rho(n)}'} 
\otimes \\ \otimes  
C_{\Gamma_1, \dots, \Gamma_n}^{(q, N)} (\sigma) 
\wh{\hbar}_{\Gamma_{\rho(1)}'}^{(N)}
\dots 
\wh{\hbar}_{\Gamma_{\rho(n)}'}^{(N)}. 
\end{multline}
Now, if we look at the 
products of $\hbar_{\Gamma}^{(N)}$ in the associated graded and take the leading term, then  
the equality $\mathbf{L}_{N} (\wt{\Gamma}) = \mathbf{R}_{N} (\wt{\Gamma})$ leaves us with 
\begin{multline*} 
\sum_{\sigma, \rho \in S_n} 
A_{\Gamma_1, \Gamma_{(\rho \circ \sigma^{-1}) (1)}'}
\dots 
A_{\Gamma_n, \Gamma_{(\rho \circ \sigma^{-1}) (n)}'}
C_{\Gamma_{(\rho \circ \sigma^{-1}) (1)}', 
\dots, \Gamma_{(\rho \circ \sigma^{-1}) (n)}'}^{(q, N)} (\sigma) 
\times \\ \times 
q (\wt{\Gamma}', \rho) 
=
\sum_{\sigma, \rho \in S_n}
A_{\Gamma_{\sigma (1)}, \Gamma_{\rho(1)}'} 
\dots 
A_{\Gamma_{\sigma (n)}, \Gamma_{\rho(n)}'} 
C_{\Gamma_1, \dots, \Gamma_n}^{(q, N)} (\sigma) 
q (\wt{\Gamma}', \rho),  
\end{multline*} 
for every $n \in \mathbb{Z}_{> 0}$, every $\wt{\Gamma} = (\Gamma_1, \Gamma_2, \dots, \Gamma_n) \in 
(\mathcal{Y}_{\leqslant N}^{+} ([2 d]))^{n}$, and every 
$\wt{\Gamma}' = (\Gamma_1', \Gamma_2', \dots, \Gamma_n') \in 
(\mathcal{Y}_{\leqslant N}^{+} ([2 d]))^{n}$, such that 
$\Gamma_1' \succcurlyeq \Gamma_2' \succcurlyeq \dots \succcurlyeq \Gamma_n'$. 
The left hand side of this equality can be simplified, 
if one takes into account the property \eqref{eq:q_Gamma_tilde_sigma_tau}, which implies 
\begin{equation*} 
q (\wt{\Gamma}', \rho) = 
q (\wt{\Gamma}', (\rho \circ \sigma^{-1}) \circ \sigma) = 
q (\wt{\Gamma}', \rho \circ \sigma^{-1}) q (\Gamma_{\rho \circ \sigma^{-1}}', \sigma). 
\end{equation*}
Introducing an index of summation 
$\varkappa = \rho \circ \sigma^{-1}$ in place of $\rho$, and 
invoking the condition of existence of classical limit \eqref{eq:classical_limit}, one obtains: 
\begin{multline} 
\label{eq:left_coequivariance_Ln} 
\sum_{\varkappa \in S_{n}} 
A_{\Gamma_1, \Gamma_{\varkappa (1)}'} \dots 
A_{\Gamma_n, \Gamma_{\varkappa (n)}'} 
q (\wt{\Gamma}', \varkappa) 
= \\ = 
\sum_{\sigma, \rho \in S_n}
A_{\Gamma_{\sigma (1)}, \Gamma_{\rho(1)}'} 
\dots 
A_{\Gamma_{\sigma (n)}, \Gamma_{\rho(n)}'} 
C_{\Gamma_1, \dots, \Gamma_n}^{(q, N)} (\sigma) 
q (\wt{\Gamma}', \rho),  
\end{multline} 
Consider now the case $n = 2$. 
For every $\Gamma_1, \Gamma_2, \Gamma_1', \Gamma_2' \in \mathcal{Y}_{\leqslant N}^{+} ([2 d])$, 
such that $\Gamma_1' \succcurlyeq \Gamma_2'$, 
since $q (\wt{\Gamma}', \mathit{id}) = 1$, we have 
\begin{multline*} 
A_{\Gamma_1, \Gamma_1'} A_{\Gamma_2, \Gamma_2'} 
+ A_{\Gamma_1, \Gamma_2'} A_{\Gamma_2, \Gamma_1'} q (\wt{\Gamma}', (12)) 
= \\ = 
C_{\Gamma_1, \Gamma_2}^{(q, N)} (\mathit{id}) \big\lbrace 
A_{\Gamma_1, \Gamma_1'} A_{\Gamma_2, \Gamma_2'} 
+ A_{\Gamma_1, \Gamma_2'} A_{\Gamma_2, \Gamma_1'} q (\wt{\Gamma}', (12))
\big\rbrace 
+ \\ + 
C_{\Gamma_1, \Gamma_2}^{(q, N)} ((12)) \big[ 
A_{\Gamma_2, \Gamma_1'} A_{\Gamma_1, \Gamma_2'} 
+ A_{\Gamma_2, \Gamma_2'} A_{\Gamma_1, \Gamma_1'} 
q (\wt{\Gamma}', (12))
\big], 
\end{multline*} 
where $\mathit{id}, (12) \in S_2$ are the two elements of the symmetric group $S_2$. 
From the classical limit condition \eqref{eq:classical_limit}, one obtains: 
\begin{equation*} 
C_{\Gamma_1, \Gamma_2}^{(q, N)} (\mathit{id}) +  
C_{\Gamma_1, \Gamma_2}^{(q, N)} ((12)) q (\wt{\Gamma}, (12)) = 1. 
\end{equation*}
Observe that $q (\wt{\Gamma}, (12)) = q_{\Gamma_1, \Gamma_2}$. 
Express now $C_{\Gamma_1, \Gamma_2}^{(q, N)} (\mathit{id})$ via $C_{\Gamma_1, \Gamma_2}^{(q, N)} ((12))$ 
and substitute the result into the previous equality. Cancelling out $C_{\Gamma_1, \Gamma_2}^{(q, N)} ((12))$, 
one arrives at 
\begin{multline} 
\label{eq:quantum_matrices_left}
\big[ 
A_{\Gamma_2, \Gamma_1'} A_{\Gamma_1, \Gamma_2'} 
+ A_{\Gamma_2, \Gamma_2'} A_{\Gamma_1, \Gamma_1'} 
q_{\Gamma_1', \Gamma_2'}
\big] 
- \\ -
q_{\Gamma_1, \Gamma_2} 
\big\lbrace 
A_{\Gamma_1, \Gamma_1'} A_{\Gamma_2, \Gamma_2'} 
+ A_{\Gamma_1, \Gamma_2'} A_{\Gamma_2, \Gamma_1'} q_{\Gamma_1', \Gamma_2'}
\big\rbrace = 0. 
\end{multline}
Since the level of truncation $N$ is arbitrary, this corresponds precisely to the 
first half 
of the defining relations of $\mathcal{M}_{2 d}^{(q)}$. 
The second half 
stems in a totally similar manner from the right coequivariance condition 
\eqref{eq:right_affine_coequivariance}.  

We still need to define the coefficients $C_{\Gamma_1, \dots, \Gamma_n}^{(q, N)} (\sigma)$. Let $n \in \mathbb{Z}_{> 0}$. 
Denote 
\begin{equation*} 
L_n (\wt{\Gamma}, \wt{\Gamma}') := \sum_{\varkappa \in S_{n}} 
A_{\Gamma_1, \Gamma_{\varkappa (1)}'} \dots 
A_{\Gamma_n, \Gamma_{\varkappa (n)}'} 
q (\wt{\Gamma}', \varkappa), 
\end{equation*}
where $\wt{\Gamma} = (\Gamma_1, \dots, \Gamma_n)$ and  
$\wt{\Gamma}' = (\Gamma_1', \dots, \Gamma_n')$ are in $(\mathcal{Y}_{\leqslant N}^{+} ([2 d]))^{n}$. 
Note, that $L_n (\wt{\Gamma}, \wt{\Gamma}')$ is precisely the left-hand side of 
\eqref{eq:left_coequivariance_Ln} expressing the 
left coequivariance condition \eqref{eq:left_affine_coequivariance}, so there is also a right analogue 
\begin{equation*} 
R_n (\wt{\Gamma}, \wt{\Gamma}') := \sum_{\varkappa \in S_{n}} 
A_{\Gamma_{\varkappa (1)}, \Gamma_{1}'} \dots 
A_{\Gamma_{\varkappa (n)}, \Gamma_{n}'} 
q (\wt{\Gamma}, \varkappa), 
\end{equation*}
which corresponds to the right coequivariance condition \eqref{eq:right_affine_coequivariance}. 
We claim that the following properties hold: 
\begin{equation} 
\label{eq:Ln_Rn_properties}
L_n (\wt{\Gamma}_{\sigma}, \wt{\Gamma}') = q (\wt{\Gamma}, \sigma) L_n (\wt{\Gamma}, \wt{\Gamma}'), \quad 
R_n (\wt{\Gamma}, \wt{\Gamma}_{\sigma}') = q (\wt{\Gamma}', \sigma) R_n (\wt{\Gamma}, \wt{\Gamma}'), 
\end{equation}
for any $\sigma \in S_n$. This can be done by induction in $n$. 
Consider $L_n (\wt{\Gamma}, \wt{\Gamma}')$, for example. 
If $n = 2$, then one arrives at \eqref{eq:quantum_matrices_left}. 
If $n > 2$, then proceed as follows. 
Observe, that if the property mentioned holds 
for some given $\sigma \in S_n$ and $\tau \in S_n$, and for any $\wt{\Gamma}$ and $\wt{\Gamma}'$, 
then 
\begin{equation*} 
L_n (\wt{\Gamma}_{\sigma \circ \tau}, \wt{\Gamma}') = 
q (\wt{\Gamma}_{\sigma}, \tau) L_n (\wt{\Gamma}_{\sigma}, \wt{\Gamma}') = 
q (\wt{\Gamma}_{\sigma}, \tau) q (\wt{\Gamma}, \sigma) L_n (\wt{\Gamma}, \wt{\Gamma}'), 
\end{equation*}
and it holds for $\sigma \circ \tau \in S_n$ as well due to \eqref{eq:q_Gamma_tilde_sigma_tau}, 
Therefore, it suffices to check it only on the generators of $S_n$. 
Rewrite $L_n (\wt{\Gamma}, \wt{\Gamma}')$ as follows: 
\begin{multline*} 
L_n (\wt{\Gamma}, \wt{\Gamma}') = 
\sum_{m = 1}^{n} \sum_{\rho \in S_{n - 1}} 
A_{\Gamma_1, \Gamma_{((mn) \circ \rho_{n}^{+}) (1)}'} \dots 
A_{\Gamma_1, \Gamma_{((mn) \circ \wt{\rho}) (n - 1)}'} 
A_{\Gamma_n, \Gamma_m'} 
\times \\ \times
q (\wt{\Gamma}', (mn)) q (\wt{\Gamma}_{(mn)}')
= \sum_{m = 1}^{n} L_{n - 1} (\wt{\Gamma}_{\leqslant n - 1}, (\wt{\Gamma}_{(mn)}')_{\leqslant n - 1}) 
q (\wt{\Gamma}', (mn)), 
\end{multline*}
where $\rho_{n}^{+}$ denotes the canonical image of $\rho \in S_{n - 1}$ in $S_n$, such that 
$\rho_{n}^{+} (n) = n$ and $\rho_{n}^{+} (i) = \rho (i)$, $i < n$, 
and the symbol  
$(mn) \in S_n$ denotes the transposition of $m$ and $n$, 
and $(-)_{\leqslant n - 1}$ corresponds to a truncation of a string of symbols, so that  
$\wt{\Gamma}_{\leqslant n - 1} = (\Gamma_1, \dots, \Gamma_{n - 1})$, 
and, similarly, for $\wt{\Gamma}_{(mn)}'$. 
If one now makes an inductive assumption 
that \eqref{eq:Ln_Rn_properties} holds for $n - 1$, 
then this implies $L_n (\wt{\Gamma}_{\lambda_{n}^{+}}, \wt{\Gamma}') = 
q (\wt{\Gamma}, \lambda_{n}^{+}) L_n (\wt{\Gamma}, \wt{\Gamma}')$, for all $\lambda \in S_{n - 1}$. 
In a totally similar way, 
isolating the first, but not the $n$-th factor in the products, 
one can show, that 
$L_n (\wt{\Gamma}_{\mu_{1}^{+}}, \wt{\Gamma}') = 
q (\wt{\Gamma}, \lambda_{1}^{+}) L_n (\wt{\Gamma}, \wt{\Gamma}')$, 
for any $\mu \in S_{n -1}$, where $\mu_{1}^{+} \in S_n$ is the permutation, such that 
$\mu_{1}^{+} (1) = 1$, and $\mu_{1}^{+} (i + 1) = \mu (i)$, $i \in [n - 1]$.  
Since the collection of permutations of the shape $\lambda_{n}^{+}$ and $\mu_{1}^{+}$, 
where $\lambda, \mu \in S_{n - 1}$, generate the whole $S_n$, the property claimed follows. 
The second equality in \eqref{eq:Ln_Rn_properties} is established in a similar way. 

Return now to the left coequivariance condition $\mathbf{L}_{N} (\wt{\Gamma}) = \mathbf{R}_{N} (\wt{\Gamma})$, 
where $\wt{\Gamma} = (\Gamma_1, \dots, \Gamma_n) \in (\mathcal{Y}_{\leqslant N}^{+} ([2 d]))^n$. 
Consider first the case where all $\Gamma_i = \Gamma_0 \in \mathcal{Y}_{\leqslant N}^{+} ([2 d])$, $i \in [n]$. 
The commutation relations for the ideal $\mathcal{I}$ (more precisely, those that stem from the \emph{right} 
coequivariance condition), imply 
an equality 
$A_{\Gamma_0, \Gamma_{\lambda (1)}'} \dots A_{\Gamma_0, \Gamma_{\lambda (n)}'} = q (\wt{\Gamma}', \lambda) 
A_{\Gamma_0, \Gamma_1'} \dots A_{\Gamma_0, \Gamma_n'}$, for any $\lambda \in S_n$. 
Therefore, the 
requirement 
$\mathbf{L}_{N} (\wt{\Gamma}) = \mathbf{R}_{N} (\wt{\Gamma})$ 
acquires 
in this case 
the shape: 
\begin{multline*} 
\sum_{\substack{
\Gamma_1', \dots, \Gamma_n' \in \mathcal{Y}_{\leqslant N}^{+} ([2 d]),\\
\Gamma_1' \succcurlyeq \dots \succcurlyeq \Gamma_n'
}}
\sum_{\sigma, \rho \in S_n} 
A_{\Gamma_0, \Gamma_1'} A_{\Gamma_0, \Gamma_2'} \dots A_{\Gamma_0, \Gamma_n'} \otimes 
\wh{\hbar}_{\Gamma_{\rho (1)}'} \wh{\hbar}_{\Gamma_{\rho (2)}'} \dots \wh{\hbar}_{\Gamma_{\rho (n)}'} 
\times \\ \times 
\Big[ 
q (\wt{\Gamma}', \rho \circ \sigma^{-1}) 
C_{\Gamma_{(\rho \circ \sigma^{-1}) (1)}', \dots, \Gamma_{(\rho \circ \sigma^{-1}) (n)}'}^{(q, N)} (\sigma)
- q (\wt{\Gamma}', \rho) 
C_{\Gamma_0, \dots, \Gamma_0}^{(q, N)} (\sigma)
\Big] = 0, 
\end{multline*}
where $\wt{\Gamma}' := (\Gamma_1', \Gamma_2', \dots, \Gamma_n')$. 
The 
existence of the 
classical limit condition and the fact $q (\wt{\Gamma}^{(0)}, \sigma) = 1$ imply 
$\sum_{\sigma \in S_n} C_{\Gamma_0, \dots, \Gamma_0}^{(q, N)} (\sigma) = 1$. 
Invoking the property 
\eqref{eq:C_tau_sigma},  
one obtains 
$C_{\Gamma_{(\rho \circ \sigma^{-1}) (1)}', \dots, \Gamma_{(\rho \circ \sigma^{-1}) (n)}'}^{(q, N)} (\sigma) = 
q (\wt{\Gamma}', \rho \circ \sigma^{-1}) C_{\Gamma' (1), \dots, \Gamma' (n)}^{(q, N)} (\rho)$. 
Performing the summation over $\varkappa = \rho \circ \sigma^{-1} \in S_n$ yields: 
\begin{multline*} 
\sum_{\substack{
\Gamma_1', \dots, \Gamma_n' \in \mathcal{Y}_{\leqslant N}^{+} ([2 d]),\\
\Gamma_1' \succcurlyeq \dots \succcurlyeq \Gamma_n'
}}
\sum_{\sigma, \rho \in S_n} 
A_{\Gamma_0, \Gamma_1'} A_{\Gamma_0, \Gamma_2'} \dots A_{\Gamma_0, \Gamma_n'} \otimes 
\wh{\hbar}_{\Gamma_{\rho (1)}'} \wh{\hbar}_{\Gamma_{\rho (2)}'} \dots \wh{\hbar}_{\Gamma_{\rho (n)}'} 
\times \\ \times 
\Big[ 
Z_n^{(q)} (\wt{\Gamma}')
C_{\Gamma_{1}', \dots, \Gamma_{n}'}^{(q, N)} (\sigma)
- q (\wt{\Gamma}', \rho) 
\Big] = 0, 
\end{multline*}
where 
\begin{equation*}
Z_n^{(q)} (\wt{\Gamma}) := \sum_{\sigma \in S_n} (q (\wt{\Gamma}, \sigma))^{2}, 
\end{equation*}
for any $\wt{\Gamma} = (\Gamma_1, \dots, \Gamma_n) \in (\mathcal{Y} ([2 d]))^n$. 
If we look at this expression in the associated graded, 
i.e. the product $\wh{\hbar}_{\Gamma_{\rho (1)}'} \dots \wh{\hbar}_{\Gamma_{\rho (n)}'}$ 
becomes $\hbar_{\Gamma_{\rho (1)}'} \dots \hbar_{\Gamma_{\rho (n)}'} = 
q (\wt{\Gamma}', \rho) \hbar_{\Gamma_1'} \dots \hbar_{\Gamma_n'}$, then it follows, that 
the only candidate for the coefficients in the case $\Gamma_1' \succcurlyeq \dots \succcurlyeq \Gamma_n'$ is just  
$C_{\Gamma_{1}', \dots, \Gamma_{n}'}^{(q, N)} (\sigma) = q (\wt{\Gamma}', \rho)/ Z_{n}^{(q)} (\wt{\Gamma}') 
=: \bar C_{\wt{\Gamma}'}^{(q)} (\sigma)$.  
If we use this expression for \emph{all} $\wt{\Gamma}'$, not just $\Gamma_1' \succcurlyeq \dots \succcurlyeq \Gamma_n'$, 
then the property $q (\wt{\Gamma}', \rho \circ \sigma) = q (\wt{\Gamma}', \rho) q (\wt{\Gamma}_{\rho}', (\sigma))$ 
implies that $\bar C_{\wt{\Gamma}_{\rho}'}^{(q)} (\sigma) = q (\wt{\Gamma}', \rho) 
\bar C_{\wt{\Gamma}'}^{(q)} (\rho \circ \sigma)$, 
just what is needed to satisfy \eqref{eq:C_tau_sigma}. 
It remains to check if the equation 
$\mathbf{L}_{N} (\wt{\Gamma}) = \mathbf{R}_{N} (\wt{\Gamma})$, 
where $\wt{\Gamma} = (\Gamma_1, \dots, \Gamma_n) \in (\mathcal{Y}_{\leqslant N}^{+} ([2 d]))^n$, 
is indeed satisfied. 
Using the property of the coefficients $\bar C_{\wt{\Gamma}'}^{(q)} (\sigma)$ mentioned, one obtains: 
\begin{equation*} 
\mathbf{L}_{N} (\wt{\Gamma}) = 
\sum_{\substack{\Gamma_1' \succcurlyeq \dots \succcurlyeq \Gamma_n,\\
\rho \in S_n}} \frac{1}{w_{\wt{\Gamma}'}}
A_{\Gamma_1, \Gamma_{\rho (1)}'} \dots 
A_{\Gamma_n \Gamma_{\rho (n)}'} 
\bar C_{\wt{\Gamma}'}^{(q)} (\rho) \otimes
H_{\wt{\Gamma}}^{(q)}, 
\end{equation*}
where $H_{\wt{\Gamma}'}^{(q)} := 
\sum_{\varkappa \in S_n}
q (\wt{\Gamma}', \varkappa) 
\wh{\hbar}_{\Gamma_{\varkappa (1)}'} \dots \wh{\hbar}_{\Gamma_{\varkappa (n)}'}$. 
Now, using a trick of ``inserting a unit'' $\sum_{\sigma \in S_n} \bar C_{\wt{\Gamma}}^{(q)} (\sigma) q (\wt{\Gamma}, \sigma) = 1$, 
and then invoking the property $L_{n} (\wt{\Gamma}_{\sigma}, \wt{\Gamma}') = q (\wt{\Gamma}, \sigma) L_n (\wt{\Gamma}, \wt{\Gamma}')$, 
one arrives at 
\begin{equation}
\label{eq:lhs_left_coequivariance_reduced} 
\mathbf{L}_{N} (\wt{\Gamma}) = 
\sum_{
\substack{
\wt{\Gamma}' \in (\mathcal{Y}_{\leqslant N}^{+} ([2 d]))^{n},\\ 
\Gamma_1' \succcurlyeq \dots \succcurlyeq \Gamma_n'}} \frac{1}{w_{\wt{\Gamma}'}}
A_{\wt{\Gamma}, \wt{\Gamma}'}^{(q)} 
\otimes 
H_{\wt{\Gamma}'}^{(q)}, 
\end{equation} 
where $A_{\wt{\Gamma}, \wt{\Gamma}'}^{(q)} := \sum_{\sigma, \rho \in S_n} 
\bar C_{\wt{\Gamma}}^{(q)} (\sigma) 
A_{\Gamma_1, \Gamma_{\rho (1)}'} \dots 
A_{\Gamma_n \Gamma_{\rho (n)}'} 
\bar C_{\wt{\Gamma}'}^{(q)} (\rho)$. 
For the right-hand side $\mathbf{R}_{N} (\wt{\Gamma})$ in the shape \eqref{eq:rhs_left_coequivariance}, 
one can first invoke the property $R_n (\wt{\Gamma}, \wt{\Gamma}_{\rho}') = 
q (\wt{\Gamma}', \rho) R_n (\wt{\Gamma}, \wt{\Gamma}')$ to obtain 
\begin{equation*}
\mathbf{R}_{N} (\wt{\Gamma}) = 
\sum_{\substack{\Gamma_1' \succcurlyeq \dots \succcurlyeq \Gamma_n',\\
\sigma \in S_{n}}} 
\frac{1}{w_{\wt{\Gamma}'}} 
\bar C_{\wt{\Gamma}}^{(q)} (\sigma) 
A_{\Gamma_{\sigma (1)}, \Gamma_1'} \dots 
A_{\Gamma_{\sigma (n)}, \Gamma_n'} 
\otimes H_{\wt{\Gamma}'}^{(q)}.
\end{equation*}
Now, inserting the unit $\sum_{\varkappa \in S_n} \bar C_{\wt{\Gamma}'}^{(q)} (\varkappa) q (\wt{\Gamma}', \varkappa) = 1$, 
and using again the property $R_n (\wt{\Gamma}, \wt{\Gamma}_{\varkappa}') = 
q (\wt{\Gamma}', \varkappa) R_n (\wt{\Gamma}, \wt{\Gamma}')$, 
one obtains the same expression \eqref{eq:lhs_left_coequivariance_reduced} as for $\mathbf{L}_N (\wt{\Gamma})$. 
Therefore, the left coequivariance requirement \eqref{eq:left_affine_coequivariance} is satisfied, 
$\mathbf{L}_N (\wt{\Gamma}) = \mathbf{R}_N (\wt{\Gamma})$. 
The right coequivariance is established in a totally similar way. 
\qed

It is quite remarkable, that the coefficients $C_{\Gamma_1, \dots, \Gamma_N}^{(q, N)} (\sigma)$ 
for the $q$-Weyl quantization 
$W_{N}^{(q)}: (\mathcal{A}_{2 d}^{q} (\hbar))_{\leqslant N} \to (\mathcal{E}_{2 d}^{q} (\hbar))_{\leqslant N}$
that emerge in the proof do \emph{not} depend on the level of truncation $N$. 
Therefore, we immediately obtain the projective limit 
$W^{(q)}: \mathcal{A}_{2 d}^{q} (\hbar) \to \mathcal{E}_{2 d}^{q} (\hbar)$ of $W_{N}^{(q)}$
as $N \to \infty$, 
\begin{equation} 
\label{eq:qWeyl_quantization_map}
W^{(q)}: \hbar_{\Gamma_1} \hbar_{\Gamma_2} \dots \hbar_{\Gamma_n} \mapsto 
\frac{1}{Z_{n}^{(q)} (\wt{\Gamma})} \sum_{\sigma \in S_n} 
q (\wt{\Gamma}, \sigma) 
\wh{\hbar}_{\Gamma_{\sigma (1)}} \wh{\hbar}_{\Gamma_{\sigma (2)}} \dots \wh{\hbar}_{\Gamma_{\sigma (n)}}, 
\end{equation}
for $\wt{\Gamma} = (\Gamma_1, \Gamma_2, \dots, \Gamma_n) \in \mathcal{Y}^{+} ([2 d])$, 
where the coefficient 
$q (\wt{\Gamma}, \sigma)$ is determined by 
$\hbar_{\Gamma (\sigma (1))} \hbar_{\Gamma (\sigma (2))}\dots \hbar_{\Gamma (\sigma (n))} = 
q (\wt{\Gamma}, \sigma) \hbar_{\Gamma_1} \hbar_{\Gamma_2} \dots \hbar_{\Gamma_n}$, 
and the ``partition function'' $Z_{n}^{(q)} (\wt{\Gamma})$ is just 
$Z_{n}^{(q)} (\wt{\Gamma}) = \sum_{\sigma \in S_n} (q (\wt{\Gamma}, \sigma) )^{2}$. 

\begin{definition} 
The map $W^{(q)}: \mathcal{A}_{2 d}^{q} (\hbar) \to \mathcal{E}_{2 d}^{q} (\hbar)$ 
defined by \eqref{eq:qWeyl_quantization_map} is termed \emph{the $q$-Weyl quantization map}, 
$q = \| q_{i, j} \|$, $i, j \in [2 d]$.  
\end{definition}

The map $W^{(q)}$ allows to induce another star product $\circledast$ on the classical shadow $\mathcal{A}_{2 d}^{q} (\hbar)$, 
\begin{equation*} 
W^{(q)} (f) W^{(q)} (g) = W^{(q)} (f \circledast g), 
\end{equation*}
for $f, g \in \mathcal{A}_{2 d}^{q} (\hbar)$. 
Recall, that we already have a product $\star$ on $\mathcal{A}_{2 d}^{q} (\hbar)$, stemming from the normal 
quantization, and an algebra isomorphism $\varphi: (\mathcal{A}_{2 d}^{q} (\hbar), \star) \overset{\sim}{\to} 
\mathcal{E}_{2 d}^{q} (\hbar)$. Denote $\pi_{m}: \mathcal{A}_{2 d}^{q} (\hbar) \to 
\mathcal{F}^{m} \mathcal{E}_{2 d}^{q} (\hbar)/ \mathcal{F}^{m + 1} \mathcal{E}_{2 d}^{q} (\hbar)$ 
the canonical projection in the $m$-th component of the associated graded, $m \in \mathbb{Z}_{\geqslant 0}$, and 
perceive $\varphi$ as a vector space map. 
Since for any homogeneous element $f \in \mathcal{A}_{2 d}^{q} (\hbar)$ 
of degree $|f|$ we have 
$W (f) - \varphi (f) \in \mathcal{F}^{|f| + 1} \mathcal{E}_{2 d}^{q} (\hbar)$, 
it follows that 
\begin{equation*} 
(\pi_{m} \circ \varphi^{-1}) 
\Big[ 
W^{(q)} (f) W^{(q)} (g) - 
W^{(q)} \Big( \sum_{l = 0}^{m -1} \pi_{l} (f \circledast g) \Big) 
\Big] =  
\pi_m (f \circledast g), 
\end{equation*}
where $m \in \mathbb{Z}_{\geqslant 0}$. 
Applying this formula recursively, one obtains an explicit expression for 
every component of $f \circledast g$.  
\begin{proposition} 
For every $f, g \in \mathcal{A}_{2 d}^{q} (\hbar)$ and every $m \in \mathbb{Z}_{\geqslant 0}$, 
the following holds: 
\begin{multline*} 
\pi_m (f \circledast g) = 
(\pi_m \circ \varphi^{-1}) \sum_{r = 0}^{m} (-1)^r
\sum_{0 \leqslant l_1 < l_2 < \dots < l_r < m}
(W^{(q)} \circ \pi_{l_r} \circ \varphi^{-1}) \circ \dots \\
\dots 
\circ (W^{(q)} \circ \pi_{l_2} \circ \varphi^{-1}) 
\circ (W^{(q)} \circ \pi_{l_1} \circ \varphi^{-1})
\big[ W^{(q)} (f) W^{(q)} (g) \big]. 
\end{multline*}
\end{proposition}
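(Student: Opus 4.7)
The plan is to induct on $m$, taking as starting point the recursion derived immediately before the statement. Writing $u := W^{(q)}(f) W^{(q)}(g) \in \mathcal{E}_{2 d}^{q} (\hbar)$, $\alpha_k := \pi_k(f \circledast g)$, $P_k := \pi_k \circ \varphi^{-1}$ and $T_l := W^{(q)} \circ \pi_l \circ \varphi^{-1}$, the $K_{2 d}^{(q)}$-linearity of $W^{(q)}$ and $\pi_m \circ \varphi^{-1}$ turns the recursion into
\begin{equation*}
\alpha_m = P_m(u) - \sum_{l=0}^{m-1} P_m\big(W^{(q)}(\alpha_l)\big),
\end{equation*}
while the claim is that $\alpha_m = P_m(X_m(u))$, where
\begin{equation*}
X_m(u) := \sum_{r=0}^{m} (-1)^r \sum_{0 \leqslant l_1 < \dots < l_r < m} T_{l_r} \circ \dots \circ T_{l_1}(u).
\end{equation*}

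The base case $m = 0$ is trivial: the empty chain gives $X_0(u) = u$ and the recursion collapses to $\alpha_0 = P_0(u)$. For the inductive step the key combinatorial fact to isolate is the telescoping identity
\begin{equation*}
X_m(u) = u - \sum_{j=0}^{m-1} T_j\big(X_j(u)\big),
\end{equation*}
which follows by singling out the largest index $l_r = j$ in each chain and noticing that the truncated chain $l_1 < \dots < l_{r-1} < j$ runs through precisely those chains that appear in $X_j(u)$, while the sign $(-1)^r = -(-1)^{r-1}$ supplies the overall minus. Applying $P_m$ to this identity, and using the inductive hypothesis $P_j(X_j(u)) = \alpha_j$ for $j < m$ together with $T_j(X_j(u)) = W^{(q)}(P_j(X_j(u))) = W^{(q)}(\alpha_j)$, matches the right-hand side term by term with the recursion and closes the induction.

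I do not expect a serious obstacle; the crux is the telescoping identity, after which the argument is pure bookkeeping. The one subtle point worth flagging is that the base recursion is valid only because $W^{(q)}(h) - \varphi(h) \in \mathcal{F}^{|h|+1} \mathcal{E}_{2 d}^{q} (\hbar)$ for homogeneous $h$, so that $(\pi_m \circ \varphi^{-1}) \circ W^{(q)}$ vanishes on degree $l > m$ components and equals the identity on the degree $m$ component; this filtration input is used only once, up front, to derive the recursion, after which the whole argument proceeds formally at the level of the linear operators $P_k$ and $T_l$ on $\mathcal{E}_{2 d}^{q} (\hbar)$.
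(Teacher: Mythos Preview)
Your proposal is correct and follows exactly the route the paper indicates: the paper's own proof consists of the single line ``Induction by $m = 0, 1, 2, \dots$'', and you have supplied the details of that induction, starting from the recursion displayed just before the proposition. The telescoping identity $X_m(u) = u - \sum_{j=0}^{m-1} T_j(X_j(u))$ is the natural way to organize the inductive step, and your verification of it (splitting off the top index $l_r = j$) is sound.
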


\begin{proof}
Induction by $m = 0, 1, 2, \dots$. 
\qed
\end{proof}

Let us extract the ``first semiclassical correction'' from the product $\circledast$ 
corresponding to the $q$-Weyl quantization $W^{(q)}$, 
\begin{equation*} 
\langle f, g \rangle^{(q)} := \pi_{|f| + |g| + 1} (f \circledast g), 
\end{equation*} 
where $f, g \in \mathcal{A}_{2 d}^{q} (\hbar)$ are homogeneous elements of degrees $|f|$ and $|g|$, respectively. 

\begin{definition} 
The linear graded map $\langle -, - \rangle^{(q)}: \mathcal{A}_{2 d}^{q} (\hbar) \otimes 
\mathcal{A}_{2 d}^{q} (\hbar) \to \mathcal{A}_{2 d}^{q} (\hbar)$ of degree $+ 1$ is termed 
\emph{the canonical distortion} of the $q$-Poisson bracket on the classical shadow $\mathcal{A}_{2 d}^{q} (\hbar)$. 
\end{definition}

Look at the $q$-antisymmetrization $\langle f, g \rangle_{-}^{(q)} := 
\langle f, g \rangle^{(q)} - q (g, f) \langle g, f \rangle^{(q)}$, 
where $q (g, f)$ is determined by $f g = q (g, f) g f$, $f$ and $g$ are homogeneous. 
Extending this bilinearly, one obtains another bracket on $\mathcal{A}_{2 d}^{q} (\hbar)$. 
In quantum mechanics of a $d$-dimensional system with coordinates $x = (x_1, x_2, \dots, x_d)$ and 
the canonically conjugate momenta $p = (p_1, p_2, \dots, p_d)$, the first semiclassical correction 
to the Weyl product of a pair of classical observables $F (x, p)$ and $G (x, p)$ is of the shape 
$(- \ii \hbar/ 2) \lbrace F, G \rbrace$, where $\lbrace -, - \rbrace$ is the canonical Poisson bracket, 
while the antisymmetrization corresponds to 
$- \ii \hbar \lbrace F, G \rbrace$. 
Up to a factor $\lambda = 1/ 2$ independent on $F$ and $G$, these two results coincide. 
In the $q$-deformed case the two brackets are different, 
i.e. such $\lambda$ (possibly, depending on $q$) does not exist. 

\begin{proposition} 
The canonical distortion $\langle -, - \rangle^{(q)}$ of the $q$-Poisson bracket 
satisfies the 2-cocycle condition 
\begin{equation*} 
u \langle v, w \rangle^{(q)} - 
\langle u v, w \rangle^{(q)} + 
\langle u, v w \rangle^{(q)} - 
\langle u, v \rangle^{(q)} w = 0, 
\end{equation*}
for any $u, v, w \in \mathcal{A}_{2 d}^{q} (\hbar)$. 
The $q$-antisymmetrized bracket $\langle -, - \rangle_{-}^{(q)}$ 
yields a $q$-Poisson structure on $\mathcal{A}_{2 d}^{q} (\hbar)$. 
\end{proposition}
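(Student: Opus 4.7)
\textit{Plan.} The plan is to derive both claims from associativity of the pullback product $\circledast$ on $\mathcal{A}_{2 d}^{q} (\hbar)$. First, I will observe that $W^{(q)}$ is a $K_{2 d}^{(q)}$-linear vector space isomorphism (the coefficients $q (\wt{\Gamma}, \sigma)/ Z_n^{(q)} (\wt{\Gamma})$ yield a nondegenerate transition to the basis on which $\varphi_q$ acts), and that the existence-of-a-classical-limit condition \eqref{eq:classical_limit} forces $W^{(q)} (f) - \varphi_q (f) \in \mathcal{F}^{|f| + 1} \mathcal{E}_{2 d}^{q} (\hbar)$ for homogeneous $f$. Thus $(\mathcal{A}_{2 d}^{q} (\hbar), \circledast)$ becomes an associative filtered deformation of the $q$-commutative classical shadow, with expansion
\begin{equation*}
f \circledast g = f g + \langle f, g \rangle^{(q)} + O (|f| + |g| + 2)
\end{equation*}
for homogeneous $f, g$, which is precisely the input required for both statements.

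For the 2-cocycle identity I will expand both sides of $(u \circledast v) \circledast w = u \circledast (v \circledast w)$ for homogeneous $u, v, w$ modulo $\mathcal{F}^{|u| + |v| + |w| + 2}$. The degree $|u| + |v| + |w|$ contributions match via $(u v) w = u (v w)$ in the classical shadow. On the left, the degree $|u| + |v| + |w| + 1$ part collects $\langle u v, w \rangle^{(q)}$ (from the outer $\circledast$ acting on the classical leading term $u v$) together with $\langle u, v \rangle^{(q)} w$ (the leading-order classical product of the first-order correction of $u \circledast v$ with $w$). On the right one symmetrically extracts $u \langle v, w \rangle^{(q)} + \langle u, v w \rangle^{(q)}$. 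Cross terms such as $\langle u, v \rangle^{(q)} \circledast w - \langle u, v \rangle^{(q)} w$ live in $\mathcal{F}^{|u| + |v| + |w| + 2}$ and are absorbed in the remainder. Equating the degree $|u| + |v| + |w| + 1$ components and extending by $K_{2 d}^{(q)}$-bilinearity yields the stated identity.

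For the $q$-Poisson structure of $\langle -, - \rangle_{-}^{(q)}$, the $q$-antisymmetry is an immediate consequence of the definition together with $q (u, v) q (v, u) = 1$. For the $q$-Leibniz and $q$-Jacobi axioms my plan is to identify $\langle f, g \rangle_{-}^{(q)}$ with the leading term $\pi_{|f| + |g| + 1} [f, g]_{q}^{\circledast}$ of the $q$-commutator $[f, g]_{q}^{\circledast} := f \circledast g - q (g, f) g \circledast f$ in the associative algebra $(\mathcal{A}_{2 d}^{q} (\hbar), \circledast)$; this works because the classical-degree contribution $f g - q (g, f) g f$ vanishes by $q$-commutativity of the classical shadow. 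Since the proof of the previous proposition establishing the $q$-Poisson axioms for $\langle -, - \rangle$ relied only on associativity of $\star$ and $q$-commutativity of its associated graded, both features hold equally for $\circledast$, and so the same computation transfers verbatim. The hard part will be the bookkeeping of filtration degrees throughout the associator expansions and verifying that the earlier proposition's proof genuinely depends only on the structural features shared by $\circledast$ and $\star$, rather than on specifics of the normal quantization formula \eqref{eq:normal_quantization}; once this is confirmed, the verification reduces to a straightforward transcription.
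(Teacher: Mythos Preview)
Your proposal is correct and takes essentially the same approach as the paper: both derive the claims from associativity of the $q$-Weyl product $\circledast$ on the classical shadow, with the 2-cocycle identity read off from the degree $|u|+|v|+|w|+1$ part of the associator expansion and the $q$-Poisson axioms inherited from the $q$-commutator. The paper's own proof is a single sentence to this effect, so your write-up simply supplies the bookkeeping the paper leaves implicit.
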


\begin{proof}
These facts are a straightforward consequence of the associativity of the $q$-Weyl product $\circledast$
on the $q$-affine space $\mathcal{A}_{2 d}^{q} (\hbar)$.  
\qed 
\end{proof}

It is important to point out that 
the bracket $\langle -, -\rangle^{(q)}$ stemming from the  
the $q$-Weyl quantization is \emph{not} bilinear with respect to 
the generators $\hbar_{\Gamma}$ of degree $|\Gamma| \geqslant 2$. 
In this sense, these generators 
``distort'' already the \emph{classical} picture (i.e. the Poisson bracket)
and should be perceived as dynamical variables just as $\hbar_{\Gamma}$ with $|\Gamma| = 1$, 
which correspond to the classical coordinates $x = (x_1, x_2, \dots, x_d)$ and momenta $p = (p_1, p_2, \dots, p_d)$.

\section{Noncanonical distortions}

In this section we would like to discuss briefly some other possibilities to distort the 
classical Poisson bracket $\lbrace -, -\rbrace$. Informally, the basic idea of \emph{``distortion''} 
is to add \emph{new} variables into the picture and to extend the bracket in a nontrivial way. 
We have introduced the epoch\'e algebra $\mathcal{E}_{2 d}^{q} (\hbar)$ as a kind of \emph{noncommutative neighbourhood} 
of the semiclassical parameter $\hbar$ of a mechanical system with $d$ degrees of freedom, 
but, of course, for the mathematical construction the nature of this parameter is not important. 
One encounters the canonical commutation and anticommutation relations, for example, in statistical physics 
considering the creation and annihilation operators of the particles constituting a system (different types of 
bosons and fermions). Extending these relations with a central parameter $g$ (the interaction parameter), 
the generic shape of these relations is as follows: 
\begin{equation*} 
\begin{gathered}
\psi_{i}^{-} \psi_{j}^{+} - \varepsilon_{j, i} \psi_{j}^{+} \psi_{i}^{-} = g \delta_{i, j}, \\
\psi_{i}^{-} \psi_{j}^{-}  - \varepsilon_{j, i} \psi_{j}^{-} \psi_{i}^{-} = 0, \quad 
 \psi_{i}^{+} \psi_{j}^{+}  - \varepsilon_{j, i} \psi_{j}^{+} \psi_{i}^{+} = 0
\end{gathered}
\end{equation*}
where $i, j \in \mathbb{Z}_{\geqslant 0}$, $\psi_{i}^{+}$ are the creation operators, and 
$\psi_{i}^{-}$ are the annihilation operators, and $\varepsilon_{i, j} \in \lbrace -1, +1 \rbrace$ 
depending on the statistics of a pair of particles associated with the indices $i$ and $j$. 
The difference is that the number of degrees of freedom $d$ is now infinite, but 
one can still introduce the analogues of coordinates $x_i$ and momenta $p_i$ as 
$\psi_i^{\pm} = (\wh{x}_i \mp \ii \wh{p}_i)/ \sqrt{2}$. 
In the condensed matter physics, it is a common practice to consider an asymptotics with respect to $g \to 0$ 
(the \emph{quasiparticle approximation}), and one may consider a noncommutative neighbourhood 
around $g$ (this parameter is sometimes termed the \emph{external} Planck constant).  
Look at a second quantized observable $\wh{B}$ of polynomial type: 
\begin{equation*} 
\wh{B} = \sum_{m = 1}^{\infty} \sum_{
\substack{i_1, i_2, \dots, i_m \in \mathbb{Z}_{\geqslant 0},\\
\alpha_1, \alpha_2, \dots, \alpha_m \in \mathbb{Z}_{2}}} 
B_{i_1, i_2, \dots, i_m}^{(\alpha_1, \alpha_2, \dots, \alpha_m)} \, 
\wh{z}_{i_1}^{(\alpha_1)}
\wh{z}_{i_2}^{(\alpha_2)}
\dots \wh{z}_{i_m}^{(\alpha_m)}, 
\end{equation*}
where only finite number of coefficients $B_{i_1, i_2, \dots, i_m}^{(\alpha_1, \alpha_2, \dots, \alpha_m)}$ 
is not zero, and $\wh{z}_{i}^{(\alpha)} = \wh{x}_{i}$, if $\alpha = \bar 0$, and 
$\wh{z}_{i}^{(\alpha)} = \wh{p}_i$, if $\alpha = \bar 1$, 
$i \in \mathbb{Z}_{\geqslant 0}$, $\mathbb{Z}_{2} = \lbrace \bar 0, \bar 1 \rbrace$. 
The commutation relations between $\wh{x}_i$ and $\wh{p}_j$, $i, j \in \mathbb{Z}_{\geqslant 0}$, 
imply that one can assume without loss of generality a certain symmetry or antisymmetry of the coefficients 
$B_{i_1, i_2, \dots, i_m}^{(\alpha_1, \alpha_2, \dots, \alpha_m)}$ with respect to the permutations of indices. 
This leads to the following construction. 

In the definitions of the $q$-analogues of the Weyl quantization, 
of the star product, and of the Poisson bracket, 
one can restrict oneself to some subspaces of 
$\mathcal{A}_{2 d}^{q} (\hbar)$ and $\mathcal{E}_{2 d}^{q} (\hbar)$ 
described, for example, in terms of $q$-symmetrizators and antisymmetrizators. 
In other words, fix a projector $\mathcal{P}: \mathcal{A}_{2 d}^{q} (\hbar) \to \mathcal{A}_{2 d}^{q} (\hbar)$, 
$\mathcal{P}^2 = \mathcal{P}$, and consider a product 
$f \circledast_{\mathcal{P}} g = \mathcal{P} (f \circledast g)$, for $f, g \in \mathcal{P} \mathcal{A}_{2 d}^{q} (\hbar)$. 
This modified product does \emph{not} need to be associative, and therefore 
the corresponding first ``semiclassical'' correction 
$\langle f, g \rangle_{\mathcal{P}}^{(q)} = \mathcal{P} (\langle f, g \rangle^{(q)})$ 
does not need to be a 2-cocycle. 
Let us write $T \langle f_1, f_2, \dots, f_n \rangle_{\mathcal{P}}^{(q)}$ 
for a bracketing of $(f_1, f_2, \dots, f_n) \in (\mathcal{A}_{2 d}^{q} (\hbar))^{n}$ 
described by a planar binary tree $T \in \mathcal{Y}$ with $n$ leaves, $|T| = n$. 
Basically, one obtains a collection of brackets 
$\lbrace T \langle -, -, \dots, - \rangle_{\mathcal{P}}^{(q)} \rbrace_{T \in \mathcal{Y}}$, 
related to each other in a more general way than the $q$-Jacobi and the $q$-Leibniz identities. 
It might be of interest to describe this collection in terms of the homotopy theory. 

One can consider the construction mentioned not on the whole classical
shadow $\mathcal{A}_{2 d}^{q} (\hbar)$, but taking a truncation
$(\mathcal{A}_{2 d}^{q} (\hbar))_{\leqslant N}$ at some level $N \in\mathbb{Z}_{> 0}$.  
As an example to have in mind, take $N = 2$ (we denote the generators as $\hbar_i$ and $\hbar_{i, j}$ in this case, 
$i, j \in [2 d]$, $i < j$), and
consider a linear map $\mathcal{P}: 
(\mathcal{A}_{2 d}^{q} (\hbar))_{\leqslant 2} \to (\mathcal{A}_{2 d}^{q} (\hbar))_{\leqslant 2}$, 
\begin{multline*} 
\mathcal{P}: \hbar_{i_1, j_1} \dots \hbar_{i_m, j_m} \hbar_{k_1} \dots \hbar_{k_n}
\mapsto \sum_{\sigma \in S_m} c_{j_1, \dots, j_m}^{i_1, \dots, i_m} (\sigma) 
\times \\ \times
\hbar_{i_1, j_{\sigma (1)}} \dots \hbar_{i_m, j_{\sigma (m)}} \hbar_{k_1} \dots \hbar_{k_n}, 
\end{multline*}
where $i_{\mu}, j_{\mu}, k_{\nu} \in [2 d]$, $\mu \in [m]$, $\nu \in [n]$, $m, n \in \mathbb{Z}_{\geqslant 0}$, 
and the coefficients $c_{j_1, \dots, j_m}^{i_1, \dots, i_m} (\sigma) = 
b_{j_1, \dots, j_m}^{i_1, \dots, i_m} (\sigma)/ 
\sum_{\rho \in S_m} (b_{j_1, \dots, j_m}^{i_1, \dots, i_m} (\sigma))^2$, where 
$b_{j_1, \dots, j_m}^{i_1, \dots, i_m} (\sigma)$ is the ``braiding'' factor 
determined by 
\begin{equation*}
\xi_{i_1} \xi_{j_{\sigma (1)}} \dots \xi_{i_m} \xi_{j_{\sigma (m)}} = 
b_{j_1, \dots, j_m}^{i_1, \dots, i_m} (\sigma) 
\xi_{i_1} \xi_{j_1} \dots \xi_{i_1} \xi_{j_m},
\end{equation*} 
for a collection symbols satisfying $\xi_i \xi_j = - q_{j, i} \xi_j \xi_i$, like $\hbar_{i, j} = - q_{j, i} \hbar_{j, i}$.

Another way to distort the classical Poisson bracket $\lbrace -, - \rbrace$ would be to consider 
more complicated quadratic commutation relations for 
the noncommutative Planck constants 
$\wh{\hbar}_{\Gamma}$, $\Gamma \in \mathcal{Y} ([2 d])$, such as the reflection equation algebra relations 
\cite{Gurevich,Gurevich_Pyatov_Saponov}. 
One can also think of more general star products in analogy with the twisted products induced by a 
coquasitriangular structure on a Hopf algebra 
\cite{Hirshfeld_Henselder,Lopez_Panaite_VanOystaeyen,Panaite_VanOystaeyen}. 
For a collection of leaf-labelled trees $\wt{\Gamma} = (\Gamma_1, \dots \Gamma_n) \in 
(\mathcal{Y}^{+} ([2 d]))^{n}$, and $I \subset [n]$, $I = \lbrace i_1 < i_2 < \dots < i_r \rbrace$, 
write $\hbar_{\wt{\Gamma}_{I}} := \hbar_{\Gamma_{i_1}} \dots \hbar_{\Gamma_{i_r}}$, and set $\bar I := [n] \backslash I$. 
Consider a bilinear product $\circledast_{R}$ on $\mathcal{A}_{2 d}^{q} (\hbar)$, 
\begin{equation*} 
\hbar_{\wt{\Gamma}_{[n]}} \circledast_{R} \hbar_{\wt{\Gamma}_{[m]}'} := 
\sum_{I \subset [n]} \sum_{J \subset [m]} 
R (\hbar_{\wt{\Gamma}_{I}}, \hbar_{\wt{\Gamma}_{J}'}) 
\hbar_{\wt{\Gamma}_{\bar I}} \hbar_{\wt{\Gamma}_{\bar J}'}, 
\end{equation*}
where $\wt{\Gamma} \in (\mathcal{Y}^{+} ([2 d]))^{n}$, 
$\wt{\Gamma}' \in (\mathcal{Y}^{+} ([2 d]))^{m}$, and 
$R$ is a bilinear map $R: \mathcal{A}_{2 d}^{q} (\hbar) \otimes \mathcal{A}_{2 d}^{q} (\hbar) \to \mathcal{A}_{2 d}^{q} (\hbar)$. 
If $\circledast_{R}$ is associative, then $R$ is determined by its restriction 
$\bar R: (\hbar_{\Gamma}, \hbar_{\Gamma'}) \mapsto R (\hbar_{\Gamma}, \hbar_{\Gamma'})$ to $V \times V$, 
where $V$ is the vector space spanned over the generators $\hbar_{\Gamma}$, $\Gamma \in \mathcal{Y}^{+} ([2 d])$. 
For the product $\star$ corresponding to the normal $q$-quantization, this is 
a map which factors through $V$, $\bar R: V \times V \to V \subset \mathcal{A}_{2 d}^{q} (\hbar)$, 
and it is a graded map of degree $+1$ 
(the grading on $V$ is given by $\mathrm{deg} (\hbar_{\Gamma}) = |\Gamma| - 1$). 
Extending the analogy with the $q$-Poisson bracket, one may say that this 
$\bar R$ corresponds to a \emph{``coquasitriangular structure multiplied by $\hbar$''}. 
Denote $\mathbb{L}_{\Gamma} (R)$  
the operators of left multiplication on the classical shadow by $\hbar_{\Gamma}$, 
$\mathbb{L}_{\Gamma} (R) = \hbar_{\Gamma} \circledast_{R} -$, $\Gamma \in \mathcal{Y} ([2 d])$. 
The commutation relations 
$\wh{\hbar}_{\Gamma} \wh{\hbar}_{\Gamma'} - 
q_{\Gamma', \Gamma} \wh{\hbar}_{\Gamma'} \wh{\hbar}_{\Gamma} = \wh{\hbar}_{\Gamma \vee \Gamma'}$  
in $\mathcal{E}_{2 d}^{q} (\hbar)$, 
where $\Gamma, \Gamma' \in \mathcal{Y}^{+} ([2 d])$, 
are translated into a system 
\begin{equation*} 
\mathbb{L}_{\Gamma} (R) \mathbb{L}_{\Gamma'} (R) - 
q_{\Gamma', \Gamma} \mathbb{L}_{\Gamma'} (R) \mathbb{L}_{\Gamma} (R) = 
\mathbb{L}_{\Gamma \vee \Gamma'} (R), 
\end{equation*}
where $\Gamma$ and $\Gamma'$ vary over $\mathcal{Y}^{+} ([2 d])$, 
which can be perceived as a generalized quantum Yang-Baxter equation for 
$\mathcal{E}_{2 d}^{q} (\hbar)$.

\section{Discussion}

As has already been pointed out in the introduction, 
the problem of quantization on a noncommutative space, 
as well as the concept of the/a noncommutative geometry 
\cite{Connes,Drinfeld,Kontsevich_1,Kontsevich_2,LeBruyn,VanOystaeyen_Verschoren}
itself, admits various approaches. 
In the most simple case, the philosophy described in the present paper can be expressed as follows. 
If one starts with a Poisson bracket $\lbrace -, - \rbrace$, say, on an algebra 
$\mathbb{C} [x_1, \dots, x_d, p_1, \dots, p_d]$ of polynomial observables of a classical mechanical system, 
then what one should do first of all is to span a vector space $\bar V$ over a collection of symbols 
$\bar x_1, \bar x_2, \dots, \bar x_d$ and $\bar p_1, \bar p_2, \dots, \bar p_d$, 
and to embed it into a $\mathbb{Z}$-graded vector space $V$ as 
a component of degree zero, $\bar V \subset V$. 
It is suggested to construct the ambient  
vector space $V$ as a span over the 
symbols $\hbar_{\Gamma}$ indexed by the planar binary leaf-labelled trees $\Gamma \in \mathcal{Y}^{+} ([2 d])$ 
(the notation is described in the main text), $\mathrm{deg} (\hbar_{\Gamma}) = |\Gamma| - 1$. 
In particular, $\bar x_i$ and $\bar p_j$, $i, j \in [d]$, correspond to $\hbar_{\Gamma}$ with $|\Gamma| = 1$.   
Next, one defines a filtered algebra $\mathcal{E}_{2 d} (\hbar)$ generated by the symbols $\wh{\hbar}_{\Gamma}$, 
$\Gamma \in \mathcal{Y}^{+} ([2 d])$ (the \emph{noncommutative Planck constants}) satisfying the relations 
$[\wh{\hbar}_{\Gamma}, \wh{\hbar}_{\Gamma'}] = \wh{\hbar}_{\Gamma \vee \Gamma'}$, for 
$\Gamma, \Gamma' \in \mathcal{Y}^{+} ([2 d])$, where the filtration is the decreasing filtration 
induced by the length of polynomials filtration on the tensor algebra $T (V)$. 
A \emph{quantization} is a linear map 
\begin{equation*} 
W: \mathcal{A}_{2 d} (\hbar) \to \mathcal{E}_{2 d} (\hbar), 
\end{equation*} 
where $\mathcal{A}_{2 d} (\hbar)$ is the associated graded of $\mathcal{E}_{2 d} (\hbar)$ 
(the \emph{classical shadow}), 
satisfying the left and right 
affine 
coequivariance and the existence of the classical limit conditions. 
Conceptually, an analogue of a Poisson bracket is the \emph{first semiclassical correction} to the 
star product $\circledast$ on the classical shadow, $W (f \circledast g) = W (f) W (g)$, 
described by 
\begin{equation*} 
\langle -, - \rangle: \mathcal{A}_{2 d} (\hbar) \otimes \mathcal{A}_{2 d} (\hbar) \to \mathcal{A}_{2 d} (\hbar), 
\end{equation*}
which is a graded linear map of degree $+1$ with respect to the grading induced by the $\mathbb{Z}$-grading on $V$.   

Intuitively, the algebra $\mathcal{E}_{2 d} (\hbar)$ is a kind of \emph{noncommutative neighbourhood} of 
the semiclassical parameter $\hbar$, 
\begin{equation*} 
\hbar \quad \longrightarrow \quad 
\wh{\hbar}_{i}, \quad \wh{\hbar}_{[i, j]}, \quad 
\wh{\hbar}_{[[i, j], k]}, \quad 
\wh{\hbar}_{[[i, j], [k, l]]}, \quad \dots, 
\end{equation*}
where $i, j, k, l, \dots \in [2 d]$, corresponding to an idea 
to replace every next commutator with a new variable. 
In fact, the nature of the parameter $\hbar$ is not important, and one can consider the same construction 
around $\lambda^{-1}$, where $\lambda \to \infty$ is the rescaling parameter for renormalization. 
The advantage of our approach is that it admits a natural Hopf algebraic deformation 
(in this paper we describe the $q$-deformation $\mathcal{E}_{2 d}^{q} (\hbar)$, where $q$ is a 
$2 d \times 2 d$ matrix of formal variables). 
It would be of interest to study this in more detail, but we leave it for another paper.


{\tiny This work is supported by 
FWO (Fonds Wetenschappelijk Onderzoek -- Vlaanderen), 
the research project G.0622.06 
(``Deformation quantization methods for algebras and categories with applications to quantum mechanics'').
}

\end{document}